\documentclass[11pt, a4paper]{article}

\usepackage[utf8]{inputenc}
\usepackage[T1]{fontenc}
\usepackage[english]{babel}
\usepackage{amsmath,amssymb,amsthm,mathtools}
\usepackage{bm}
\usepackage{textcomp}
\usepackage{enumitem}
\usepackage{geometry}
\usepackage{graphicx}
\usepackage{tikz-cd}
\usepackage{xcolor}
\usepackage{url}
\DeclareGraphicsExtensions{.pdf,.png,.jpg}
\graphicspath{{figures/}}
\usepackage[colorlinks=true,allcolors=blue]{hyperref}
\usepackage[capitalize]{cleveref}
\usepackage[numbers,sort&compress]{natbib}
\makeatletter

\newcommand{\Rmnum}[1]{\expandafter@slowromancap\romannumeral #1@}
\makeatother

\renewcommand{\d}{\,\mathrm{d}}

\newcommand{\E}{\mathbb{E}}
\newcommand{\Pp}{\mathbb{P}}

\newcommand{\free}{\mathrm{free}}
\newcommand{\fix}{\mathrm{fix}}

\newcommand{\Hh}{\mathbb{H}}
\newcommand{\Z}{\mathbb{Z}}
\newcommand{\dist}{\operatorname{dist}}

\theoremstyle{plain}
\newtheorem{theorem}{Theorem}[section]
\newtheorem{lemma}[theorem]{Lemma}
\newtheorem{proposition}[theorem]{Proposition}
\newtheorem{corollary}[theorem]{Corollary}
\theoremstyle{definition}
\newtheorem{definition}[theorem]{Definition}
\theoremstyle{remark}
\newtheorem*{remark}{Remark}

\title{\Large\bf Tangential Surface Pressures and Boundary Localization\\
in Disordered Edwards--Anderson Models}
\author{Hexiang Wang \and Keheng Zhu \and Mauris Chueng}
\date{}

\newcommand{\Addresses}{{%
\bigskip
\footnotesize
\textsc{Hexiang Wang}, School of Mathematical Sciences, Nankai University, Tianjin 300071, China\par
\texttt{Kui6539@outlook.com}
\medskip

\textsc{Keheng Zhu}, Academy for Multidisciplinary Studies, School of Mathematical Sciences, Capital Normal University, Beijing 100048, China\par
\texttt{hexistartop@gmail.com}
\medskip

\textsc{Mauris Chueng}, School of Statistics and Data Science, Jilin University of Finance and Economics,\linebreak Changchun 130117, China\par
\texttt{maurischueng@gmail.com}
}}

\begin{document}
\maketitle

\begin{abstract}
We study boundary free energies in disordered Edwards--Anderson Ising models. For rectangular strips of fixed width, we prove that the expected free-to-fixed boundary correction, divided by the tangential length, converges for every finite inverse temperature and at zero temperature. The proof combines a uniform almost-additivity estimate with product-measure concentration, yielding self-averaging along tangential intervals. We derive an exact finite-volume Gaussian interpolation identity and prove that a uniform exponential boundary-mixing condition implies boundary localization with an explicit exponential rate. The tangential pressure theorem extends to all dimensions and symmetric coupling laws with finite first moment; under a product-concentration hypothesis, self-averaging is obtained along tangential cubes. Finally, a subcritical open-bond criterion, verified explicitly for sparse signed couplings, gives low-temperature localization without assuming an infinite-volume Gibbs-state property.
\end{abstract}
\tableofcontents
\section{Introduction}

Imagine measuring the same disordered sample twice. In the first experiment the spins at one face are allowed to fluctuate; in the second they are coupled to a frozen exterior. The extensive free energy is the same to leading order, but their difference can retain a contribution proportional to the length of the face. In a disordered magnet this contribution is not automatically a domain-wall free energy: it contains local boundary response, and it may have a nonzero quenched mean.

The distinction matters particularly for the Edwards-Anderson model. A free-to-fixed correction, a periodic-to-antiperiodic difference, and a centered sample fluctuation are different random observables. Their normalizations and limiting mechanisms need not agree. The first paper in this line of work gave one-dimensional van Hove counterexamples, a high-temperature surface theorem under a strict Dobrushin condition, and a finite-volume Gaussian interpolation identity \cite{WangZhuChueng2026}. The present paper addresses the low-temperature Gaussian problem by proving a rigorous partial result that does not assume uniqueness of the infinite-volume Gibbs state.

The standard model is defined on $\Z^d$ with independent centered Gaussian couplings. To make the usual cube observable precise in two dimensions, let $\Lambda_N=\{1,\ldots,N\}^2$, let $E_N$ be its internal nearest-neighbor edges, and let
\[
\partial_{\rm e}\Lambda_N=\{\{x,y\}:x\in\Lambda_N,\ y\notin\Lambda_N,\ |x-y|_1=1\},
\qquad |\partial_{\rm e}\Lambda_N|=4N.
\]
With exterior spins fixed to $+1$, define
\[
H_N^\free=-\sum_{\{x,y\}\in E_N}J_{xy}\sigma_x\sigma_y,
\qquad
H_N^\fix=H_N^\free-\sum_{\{x,y\}\in\partial_{\rm e}\Lambda_N}J_{xy}\sigma_x,
\]
and $F_N^b=-\beta^{-1}\log\sum_\sigma e^{-\beta H_N^b}$. Thus $|\partial\Lambda_N|$ below means the crossing-edge count $|\partial_{\rm e}\Lambda_N|$. The natural question is whether
\begin{equation}
\frac{\E(F_N^\fix-F_N^\free)}{|\partial\Lambda_N|}
\end{equation}
has a full-sequence limit at fixed low temperature. For bounded interactions in the Dobrushin uniqueness regime this is a boundary-layer problem. For unbounded Gaussian couplings at low temperature, however, the required boundary localization is not known. The finite-volume identity alone does not provide it.

Our contribution is a reduction with an unconditional tangential theorem and a quantitative conditional localization result. We work in two dimensions and consider a strip of fixed normal thickness $L$ and tangential length $n$, with a fixed random boundary field on the bottom face and free conditions on the remaining faces. We prove:
\begin{enumerate}[label=(\roman*),leftmargin=2.1em]
\item the expected bottom-face correction is almost additive in $n$ up to an error depending only on $L$;
\item the normalized correction converges in expectation, in $L^2$, and almost surely along tangential intervals in a common Gaussian environment;
\item the same conclusions hold for ground-state energy differences at $\beta=\infty$;
\item a boundary-mixing hypothesis implies exponential localization, with an explicit rate, and hence convergence as $L\to\infty$ to a half-space boundary pressure;
\item the tangential pressure construction extends to general $d$ and integrable symmetric coupling laws; under the product-concentration assumption in Theorem~\ref{thm:d-pressure}, self-averaging is proved along tangential cubes, while a sparse-coupling open-bond criterion gives a verifiable low-temperature localization regime.
\end{enumerate}

The strip theorem is deliberately weaker than the unresolved cubic statement, but it has two advantages. It is valid at all temperatures, including zero temperature, and it separates the tangential thermodynamic limit from the normal boundary-state problem. The localization theorem below is conditional: it supplies a checkable route from exponential boundary mixing to the missing normal-direction estimate, while the corresponding mixing property for the ordinary low-temperature Gaussian EA model remains open.

\section{The Gaussian strip model}

Fix an integer $L\geq 1$ and for $n\geq 1$, let
\begin{equation}
\Lambda_{L,n}=\{1,\ldots,L\}\times\{1,\ldots,n\}\subset\Z^2.
\end{equation}
The internal nearest-neighbor edges are denoted by $E_{L,n}$. We single out the bottom boundary edges
\begin{equation}
B_{L,n}^{\mathrm{bot}}
=\{e_j=((1,j),(0,j)):1\leq j\leq n\}.
\end{equation}
The exterior spins at $(0,j)$ are fixed to $+1$. The side and top boundaries are free. This single-face observable is the correct building block for a full surface term; it is not a domain-wall observable.

Let $(J_a)_{a\in E_{L,n}}$ and $(K_j)_{1\leq j\leq n}$ be independent centered Gaussian variables with variance $v>0$. All variables for all $L,n$ are restrictions of one independent Gaussian field on the corresponding infinite strip. For $\sigma\in\{-1,+1\}^{\Lambda_{L,n}}$ put
\begin{equation}
H_{L,n}^{\free}(\sigma)
=-\sum_{a=\{x,y\}\in E_{L,n}}J_a\sigma_x\sigma_y.
\end{equation}
For $r\in[0,1]$, set
\begin{equation}
H_{L,n,r}^{\fix}(\sigma)
=H_{L,n}^{\free}(\sigma)-r\sum_{j=1}^{n}K_j\sigma_{(1,j)}.
\end{equation}
For $\beta\in(0,\infty)$, define
\begin{equation}
F_{L,n}^{\free}=-\frac{1}{\beta}\log Z_{L,n}^{\free},\qquad
F_{L,n,r}^{\fix}=-\frac{1}{\beta}\log Z_{L,n,r}^{\fix}.
\end{equation}
The boundary correction is
\begin{equation}
\Delta F_{L,n}(r)=F_{L,n,r}^{\fix}-F_{L,n}^{\free},\qquad
\Delta F_{L,n}=\Delta F_{L,n}(1).
\end{equation}

At zero temperature replace $F$ by the ground-state energy
\begin{equation}
E_{L,n}^{b}=\min_{\sigma}H_{L,n}^{b}(\sigma),\qquad
\Delta E_{L,n}=E_{L,n}^{\fix}-E_{L,n}^{\free}.
\end{equation}

\begin{lemma}[Pointwise boundary comparison]\label{lem:pointwise}
For every realization of the couplings and every $\beta\in(0,\infty]$,
\begin{equation}
-\sum_{j=1}^{n}|K_j|\leq \Delta F_{L,n}\leq 0,
\end{equation}
with $\Delta F_{L,n}=\Delta E_{L,n}$ at $\beta=\infty$.
\end{lemma}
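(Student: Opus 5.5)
The plan is to write the partition-function ratio as a free-measure expectation and then use two elementary facts: the free Hamiltonian is invariant under a global spin flip, and $|\sigma_x|=1$. For $\beta\in(0,\infty)$ put $X(\sigma)=\sum_{j=1}^n K_j\sigma_{(1,j)}$ and let $\langle\cdot\rangle^{\free}$ be the Gibbs expectation for $H^{\free}_{L,n}$. Since $H^{\fix}_{L,n,1}=H^{\free}_{L,n}-X$, we have
\begin{equation*}
\Delta F_{L,n}=-\frac{1}{\beta}\log\frac{Z^{\fix}_{L,n,1}}{Z^{\free}_{L,n}}=-\frac{1}{\beta}\log\bigl\langle e^{\beta X}\bigr\rangle^{\free}.
\end{equation*}
Both bounds then reduce to two-sided estimates on $\langle e^{\beta X}\rangle^{\free}$.

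\emph{Lower bound.} Pointwise, $X(\sigma)\le\sum_j|K_j|$ because every spin is $\pm1$. Hence $\langle e^{\beta X}\rangle^{\free}\le e^{\beta\sum_j|K_j|}$, which gives $\Delta F_{L,n}\ge-\sum_j|K_j|$.

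\emph{Upper bound.} Every term of $H^{\free}_{L,n}$ is quadratic in the spins, so $H^{\free}_{L,n}(-\sigma)=H^{\free}_{L,n}(\sigma)$. The free Gibbs measure is therefore invariant under $\sigma\mapsto-\sigma$, while $X(-\sigma)=-X(\sigma)$. Averaging over the flip gives
\begin{equation*}
\langle e^{\beta X}\rangle^{\free}=\langle\cosh(\beta X)\rangle^{\free}\ge1,
\end{equation*}
and so $\Delta F_{L,n}\le0$.

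\emph{Zero temperature.} At $\beta=\infty$ the quantity $\Delta F_{L,n}$ is, by convention, $\Delta E_{L,n}$, which is the stated identification. It remains to prove the same bounds for $\Delta E_{L,n}$ directly. For the lower bound, $H^{\fix}\ge H^{\free}-\sum_j|K_j|$ pointwise, and minimizing both sides gives $E^{\fix}\ge E^{\free}-\sum_j|K_j|$. For the upper bound, take a minimizer $\sigma^*$ of $H^{\free}$. Since $X(-\sigma^*)=-X(\sigma^*)$, one of $\pm\sigma^*$, call it $\tilde\sigma$, satisfies $X(\tilde\sigma)\ge0$, and $H^{\free}(\tilde\sigma)=E^{\free}$ by flip symmetry. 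Then $E^{\fix}\le H^{\fix}(\tilde\sigma)\le E^{\free}$.

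Alternatively, the zero-temperature case follows from the finite-$\beta$ bounds by letting $\beta\to\infty$, since $-\beta^{-1}\log Z\to\min H$ on a finite configuration space. The direct argument is cleaner, so I would use it. There is no real obstacle here; the only point to watch is that the upper bound depends on flip symmetry, which holds because the free conditions on the side and top faces introduce no field terms.
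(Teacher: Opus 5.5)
Your proposal is correct and follows the paper's proof essentially step for step. You write $Z^{\fix}/Z^{\free}$ as $\langle e^{\beta X}\rangle^{\free}$ and use flip symmetry to get $\langle\cosh(\beta X)\rangle^{\free}\in\bigl[1,e^{\beta\sum_j|K_j|}\bigr]$; at $\beta=\infty$ you pick whichever of $\pm\sigma^*$ has $X\ge0$ and use the pointwise bound $H^{\fix}\ge H^{\free}-\sum_j|K_j|$.
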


\begin{proof}
For finite $\beta$, global spin-flip invariance of the free Gibbs measure gives
\begin{equation}
\frac{Z_{L,n}^{\fix}}{Z_{L,n}^{\free}}
=\left\langle \exp\left(\beta\sum_{j=1}^{n}K_j\sigma_{(1,j)}\right)\right\rangle_{\free}
=\left\langle \cosh\left(\beta\sum_{j=1}^{n}K_j\sigma_{(1,j)}\right)\right\rangle_{\free}.
\end{equation}
The ratio is at least one and at most $\exp(\beta\sum_j|K_j|)$, which proves the claim after taking $-\beta^{-1}\log$. At zero temperature, let $\sigma^*$ minimize $H_{L,n}^{\free}$. One of $\sigma^*$ and $-\sigma^*$ has $\sum_jK_j\sigma^*_{(1,j)}\geq0$, so global spin-flip invariance gives $E_{L,n}^{\fix}\leq E_{L,n}^{\free}$. Conversely, $H_{L,n}^{\fix}(\sigma)\geq H_{L,n}^{\free}(\sigma)-\sum_j|K_j|$ for every $\sigma$, which gives the lower bound.
\end{proof}

\begin{figure}[t]
\centering
\includegraphics[width=0.94\textwidth]{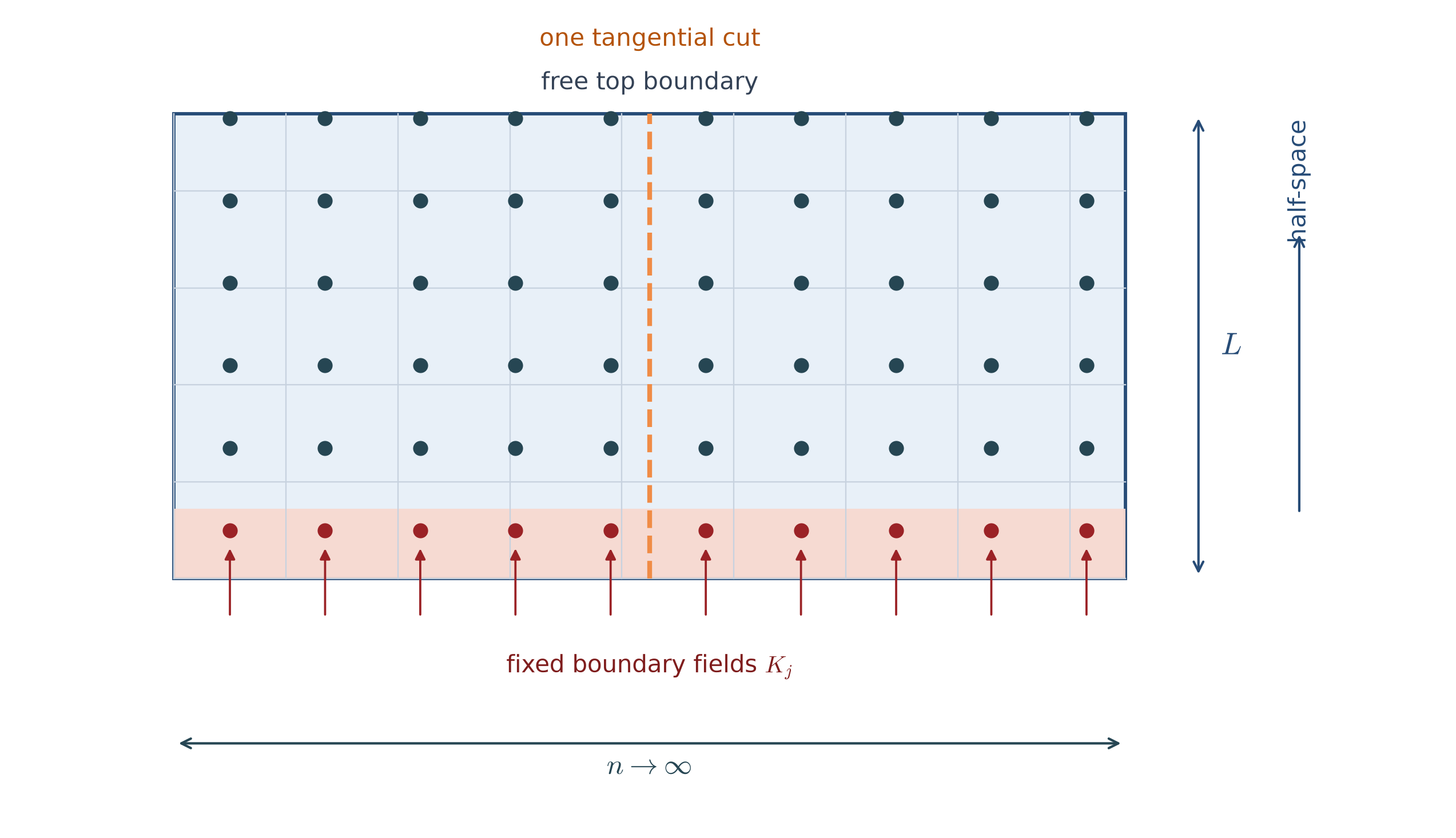}
\caption{The fixed-thickness strip used in the tangential surface pressure. The bottom face carries independent Gaussian boundary fields, the top face is free, and a vertical cut removes $L$ internal bonds. The almost-additivity error is therefore of order $L$, independent of the tangential length $n$. The limit $n\to\infty$ is unconditional; the remaining limit $L\to\infty$ requires boundary localization.}
\label{fig:strip}
\end{figure}

\section{Finite-volume interpolation and concentration}

For $r\in[0,1]$, set
\begin{equation}
m_{L,n,j}(r)=\left\langle \sigma_{(1,j)}\right\rangle_{L,n,r}^{\fix}.
\end{equation}
The next identity is finite-volume and uses no uniqueness assumption.

\begin{proposition}[Gaussian boundary interpolation]\label{prop:interpolation}
For every $L,n$ and $\beta\in(0,\infty)$,
\begin{equation}
\E\Delta F_{L,n}
=-\beta v\int_0^1 r\sum_{j=1}^{n}
\E\left[1-m_{L,n,j}(r)^2\right]\d r.
\label{eq:interpolation}
\end{equation}
\end{proposition}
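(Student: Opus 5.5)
The plan is to differentiate the interpolated free energy in $r$ and then apply Gaussian integration by parts in each boundary coupling $K_j$. First, $\Delta F_{L,n}(0)=0$, since at $r=0$ the fixed Hamiltonian coincides with the free one. Hence
\[
\Delta F_{L,n}=\int_0^1 \frac{\d}{\d r}F^{\fix}_{L,n,r}\d r .
\]
For each realization, $Z^{\fix}_{L,n,r}$ is a finite sum of exponentials that are smooth in $r$. A direct computation gives
\[
\frac{\d}{\d r}F^{\fix}_{L,n,r}=-\sum_{j=1}^n K_j\,m_{L,n,j}(r).
\]
Taking expectations, $\E\Delta F_{L,n}=-\int_0^1\sum_j \E[K_j m_{L,n,j}(r)]\d r$. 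The interchange of $\E$ and $\int_0^1$ is justified by Fubini, because $|K_j m_{L,n,j}(r)|\le |K_j|$ is integrable uniformly in $r$.

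Next, fix $r$ and $j$ and apply Gaussian integration by parts, $\E[K_j g(K)]=v\,\E[\partial_{K_j}g(K)]$, with $g=m_{L,n,j}(r)$ viewed as a function of $K_j$ (all other couplings held fixed, which is legitimate by independence). Only the term $-rK_j\sigma_{(1,j)}$ of the Hamiltonian depends on $K_j$, so
\[
\partial_{K_j}\langle\sigma_{(1,j)}\rangle=\beta r\bigl(\langle\sigma_{(1,j)}^2\rangle-\langle\sigma_{(1,j)}\rangle^2\bigr)=\beta r\bigl(1-m_{L,n,j}(r)^2\bigr),
\]
using $\sigma^2=1$. Substituting gives $\E[K_jm_{L,n,j}(r)]=\beta v r\,\E[1-m_{L,n,j}(r)^2]$. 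Summing over $j$ and integrating over $r$ yields \eqref{eq:interpolation}.

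The main technical point is justifying the integration by parts. The function $g$ is smooth and bounded by $1$, and its derivative is bounded by $\beta r$. Hence Stein's lemma applies conditionally on the remaining couplings, and the conditional identity integrates to the unconditional one. No uniqueness of Gibbs states enters, since every quantity is a finite-volume smooth function of finitely many Gaussian variables. The interior couplings $J_a$ play no role beyond being held fixed.
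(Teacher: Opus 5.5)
Your proof is correct and follows essentially the same route as the paper: differentiate in $r$ to get $-\sum_j K_j m_{L,n,j}(r)$, then apply Gaussian integration by parts in $K_j$ using $\partial_{K_j}m_{L,n,j}(r)=\beta r\,(1-m_{L,n,j}(r)^2)$. The only difference is cosmetic: you differentiate pathwise and swap $\E$ and $\int_0^1$ by Fubini, whereas the paper differentiates $\Phi_{L,n}(r)=\E\log Z^{\fix}_{L,n,r}$ directly. Your justifications (the bound $|K_j m_{L,n,j}(r)|\le |K_j|$ and the bounded derivative needed for Stein's lemma) supply details the paper leaves implicit.
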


\begin{proof}
Set $\Phi_{L,n}(r)=\E\log Z_{L,n,r}^{\fix}$. Differentiation at finite volume gives
\begin{equation}
\Phi_{L,n}'(r)
=\beta\sum_{j=1}^{n}\E\left[K_jm_{L,n,j}(r)\right].
\end{equation}
Gaussian integration by parts and the covariance identity
\begin{equation}
\frac{\partial}{\partial K_j}m_{L,n,j}(r)
=\beta r\left(1-m_{L,n,j}(r)^2\right)
\end{equation}
yield
\begin{equation}
\Phi_{L,n}'(r)
=\beta^2 v r\sum_{j=1}^{n}
\E\left[1-m_{L,n,j}(r)^2\right].
\end{equation}
Since $\E\Delta F_{L,n}=-\beta^{-1}(\Phi_{L,n}(1)-\Phi_{L,n}(0))$, integration over $r$ proves \cref{eq:interpolation}.
\end{proof}

\begin{proposition}[Gaussian concentration]\label{prop:concentration}
For finite $\beta$ and every $u>0$,
\begin{equation}
\Pp\left(\left|\Delta F_{L,n}-\E\Delta F_{L,n}\right|\geq u\right)
\leq 2\exp\left(
 -\frac{u^2}{2v\left(4[L(n-1)+(L-1)n]+n\right)}
 \right).
\label{eq:concentration}
\end{equation}
The same bound holds for $\Delta E_{L,n}$ at zero temperature.
\end{proposition}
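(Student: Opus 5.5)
The plan is to view $\Delta F_{L,n}$ as a deterministic function of the finite Gaussian vector $g=((J_a)_{a\in E_{L,n}},(K_j)_{1\le j\le n})$ and to apply the Gaussian concentration inequality for Lipschitz functions (Tsirelson--Ibragimov--Sudakov). The vector $g$ has $N=|E_{L,n}|+n$ independent $\mathcal N(0,v)$ coordinates, and $|E_{L,n}|=L(n-1)+(L-1)n$. Write $g=\sqrt v\,\xi$ with $\xi$ standard. If $f$ is $M$-Lipschitz in the Euclidean norm on $\mathbb R^N$, then $\xi\mapsto f(\sqrt v\,\xi)$ is $\sqrt v M$-Lipschitz, and the standard inequality gives
\[
\Pp(|f-\E f|\ge u)\le 2\exp\bigl(-u^2/(2vM^2)\bigr).
\]
It therefore suffices to show that $\Delta F_{L,n}$ and $\Delta E_{L,n}$ are Lipschitz with
\[
M^2=4|E_{L,n}|+n=4[L(n-1)+(L-1)n]+n .
\]

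I will obtain the Lipschitz bound by a single comparison argument that covers every $\beta\in(0,\infty]$ at once. Fix two coupling configurations $g,g'$. For every spin configuration $\sigma$,
\[
|H^{\fix}_{g}(\sigma)-H^{\fix}_{g'}(\sigma)|\le \sum_a|J_a-J_a'|+\sum_j|K_j-K_j'|,
\qquad
|H^{\free}_{g}(\sigma)-H^{\free}_{g'}(\sigma)|\le \sum_a|J_a-J_a'|,
\]
because $|\sigma_x\sigma_y|=|\sigma_{(1,j)}|=1$. Both $-\beta^{-1}\log\sum_\sigma e^{-\beta H}$ and $\min_\sigma H$ are monotone and commute with adding constants. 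Hence each of them changes by at most $\sup_\sigma|H_g-H_{g'}|$, so $F^{\fix}$ and $E^{\fix}$ move by at most the first bound, and $F^{\free}$ and $E^{\free}$ by at most the second. Subtracting gives
\[
|\Delta F_{L,n}(g)-\Delta F_{L,n}(g')|\le 2\sum_a|J_a-J_a'|+\sum_j|K_j-K_j'|.
\]
Cauchy--Schwarz with weights $2$ on the $|E_{L,n}|$ bond coordinates and $1$ on the $n$ field coordinates then bounds the right side by $\sqrt{4|E_{L,n}|+n}\,\|g-g'\|_2$. This is exactly the required $M$.

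Equivalently, at finite $\beta$ one can compute the gradient directly:
\[
\partial_{J_a}\Delta F=-\langle\sigma_x\sigma_y\rangle^{\fix}+\langle\sigma_x\sigma_y\rangle^{\free},
\qquad
\partial_{K_j}\Delta F=-m_{L,n,j}(1).
\]
Since $|\partial_{J_a}\Delta F|\le 2$ and $|\partial_{K_j}\Delta F|\le 1$, this gives $\|\nabla\Delta F\|^2\le 4|E_{L,n}|+n$.

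The only point needing care is $\beta=\infty$. There $\Delta E_{L,n}$ is a difference of minima of finitely many linear functions, so it is piecewise linear rather than smooth. The comparison argument above avoids differentiability entirely, and Gaussian concentration only requires the Lipschitz property. Integrability is not an issue: Lemma~\ref{lem:pointwise} shows $\Delta F_{L,n}$ is bounded by $\sum_j|K_j|$, so $\E\Delta F_{L,n}$ is finite. With these points in place, \eqref{eq:concentration} follows for finite $\beta$ and for $\Delta E_{L,n}$ alike. I do not expect a genuine obstacle; the main thing to get right is the factor $4$, which comes from the free and fixed contributions each carrying a bond derivative of size at most one.
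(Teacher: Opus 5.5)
Your proposal is correct and follows the paper's route: Gaussian concentration for a Lipschitz function of the $|E_{L,n}|+n$ rescaled couplings, where coordinate constants $2$ for bonds and $1$ for boundary fields give $M^2=4[L(n-1)+(L-1)n]+n$. Your sup-norm comparison $|F(H)-F(H')|\le\sup_\sigma|H-H'|$ is a slightly cleaner way to get the Lipschitz bound than the paper's derivative bounds ``at differentiability points,'' because it covers $\beta=\infty$ directly without any a.e.-gradient argument.
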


\begin{proof}
At differentiability points,
\begin{equation}
\left|\frac{\partial\Delta F_{L,n}}{\partial J_a}\right|\leq 2,\qquad
\left|\frac{\partial\Delta F_{L,n}}{\partial K_j}\right|\leq 1.
\end{equation}
There are $L(n-1)+(L-1)n$ internal bonds and $n$ boundary bonds. Thus the squared Euclidean Lipschitz constant after rescaling by the Gaussian variance is at most $v\{4[L(n-1)+(L-1)n]+n\}$. The Gaussian concentration inequality gives \cref{eq:concentration}. The ground-state energy is a Lipschitz function of every coupling with the same coordinate bounds, so the zero-temperature assertion follows by the same concentration theorem.
\end{proof}

\section{Tangential surface pressure}

Cutting the strip between columns $m$ and $m+1$ removes precisely $L$ internal bonds. The two resulting components have the same law as independent strips of lengths $m$ and $n$.

\begin{lemma}[Almost-additivity]\label{lem:almostadd}
Let $a_{L,n}=\E\Delta F_{L,n}$ at finite temperature, or $a_{L,n}=\E\Delta E_{L,n}$ at zero temperature. Then
\begin{equation}
\left|a_{L,m+n}-a_{L,m}-a_{L,n}\right|
\leq C_L,\qquad
C_L=2L\sqrt{\frac{2v}{\pi}}.
\label{eq:almostadd}
\end{equation}
\end{lemma}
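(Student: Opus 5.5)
Consider the strip $\Lambda_{L,m+n}$ and the vertical cut between tangential columns $m$ and $m+1$. The cut removes the $L$ internal bonds $b_1,\dots,b_L$ joining $(i,m)$ to $(i,m+1)$, $1\le i\le L$. Write $J_{b_i}$ for their couplings. Let $\widetilde H^{\free}$ and $\widetilde H^{\fix}$ denote the free and fixed Hamiltonians on $\Lambda_{L,m+n}$ with these $L$ couplings set to zero. Then $\widetilde H^b$ is a sum of two Hamiltonians on disjoint strips, and its partition function factorizes. The first factor is the strip $\{1,\dots,L\}\times\{1,\dots,m\}$. The second is the translated strip $\{1,\dots,L\}\times\{m+1,\dots,m+n\}$. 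In both the free and fixed cases (the bottom fields $K_j$ split between the two pieces), the decoupled correction is
\begin{equation*}
\widetilde{\Delta F}=\Delta F^{(1)}_{L,m}+\Delta F^{(2)}_{L,n}.
\end{equation*}
The two pieces are built from disjoint sets of i.i.d.\ Gaussians. By translation invariance of the law, the second piece has the same distribution as $\Delta F_{L,n}$. Hence $\E\widetilde{\Delta F}=a_{L,m}+a_{L,n}$, and the same holds at $\beta=\infty$ with ground-state energies.

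The remaining step is a pointwise comparison. For each $b\in\{\free,\fix\}$ the Hamiltonians differ by
\begin{equation*}
H^b_{L,m+n}-\widetilde H^b=-\sum_{i=1}^L J_{b_i}\sigma_{(i,m)}\sigma_{(i,m+1)},
\end{equation*}
which lies in $\bigl[-\sum_i|J_{b_i}|,\ \sum_i|J_{b_i}|\bigr]$ uniformly in $\sigma$. Therefore
\begin{equation*}
\bigl|F^b_{L,m+n}-\widetilde F^b\bigr|\le\sum_i|J_{b_i}|
\end{equation*}
for finite $\beta$, since $-\beta^{-1}\log\sum e^{-\beta H}$ is monotone and shifts by constants. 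The same bound holds for minima at $\beta=\infty$. Subtracting the free from the fixed comparison by the triangle inequality gives
\begin{equation*}
\bigl|\Delta F_{L,m+n}-\widetilde{\Delta F}\bigr|\le 2\sum_{i=1}^L|J_{b_i}|.
\end{equation*}

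Taking expectations and using $\E|J|=\sqrt{2v/\pi}$ for a centered Gaussian of variance $v$ yields
\begin{equation*}
|a_{L,m+n}-a_{L,m}-a_{L,n}|\le 2L\sqrt{2v/\pi}=C_L.
\end{equation*}
All quantities are integrable by \cref{lem:pointwise}, which bounds $|\Delta F|$ by $\sum_j|K_j|$.

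The only point needing care is the factor $2$. Each of $F^{\fix}$ and $F^{\free}$ incurs its own error of at most $\sum_i|J_{b_i}|$, and the errors need not cancel. One could try to get a sharper constant through an interpolation in the cut couplings, but the crude bound already gives exactly $C_L$. The identification of the second piece with the law of $\Delta F_{L,n}$ is routine, since the fields $(J_a)$ and $(K_j)$ are i.i.d.\ and the Hamiltonian is translation covariant in the tangential direction.
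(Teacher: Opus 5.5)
Your proposal is correct and follows essentially the same route as the paper's proof. Both set the $L$ crossing couplings to zero, use factorization and independence to get $a_{L,m}+a_{L,n}$ for the cut correction, bound the difference pointwise by $2\sum_{q}|J_q|$ via the one-bond comparison for each of the two free energies, and then take expectations.
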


\begin{proof}
Let $\Delta F_{L,m+n}^{\mathrm{cut}}$ denote the same free-energy difference after setting the $L$ couplings crossing the cut equal to zero. The Hamiltonian then factorizes, and independence of the disorder in the two components gives
\begin{equation}
\E\Delta F_{L,m+n}^{\mathrm{cut}}=a_{L,m}+a_{L,n}.
\end{equation}
Changing one internal coupling from $J$ to $0$ changes each of the two free energies by at most $|J|$. Hence
\begin{equation}
\left|\Delta F_{L,m+n}-\Delta F_{L,m+n}^{\mathrm{cut}}\right|
\leq 2\sum_{q=1}^{L}|J_q|.
\end{equation}
Taking expectations proves \cref{eq:almostadd}. The same one-bond Lipschitz estimate holds for ground-state energies.
\end{proof}

\begin{theorem}[Existence of the fixed-thickness surface pressure]\label{thm:strip}
For every fixed $L\geq 1$ and $\beta\in(0,\infty]$, the limit
\begin{equation}
s_L(\beta)=\lim_{n\to\infty}\frac{\E\Delta F_{L,n}}{n}
\end{equation}
exists, where $\Delta F_{L,n}$ is interpreted as $\Delta E_{L,n}$ when $\beta=\infty$. Moreover,
\begin{equation}
-\sqrt{\frac{2v}{\pi}}\leq s_L(\beta)\leq 0.
\label{eq:pressurebound}
\end{equation}
For a common realization of the Gaussian field,
\begin{equation}
\frac{\Delta F_{L,n}}{n}\longrightarrow s_L(\beta)
\quad\text{almost surely and in }L^2,
\label{eq:stripas}
\end{equation}
and the same assertion holds with $\Delta E_{L,n}$ at zero temperature.
\end{theorem}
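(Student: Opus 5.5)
The plan is to obtain the limit in expectation from Lemma~\ref{lem:almostadd} by a Fekete-type argument, and then upgrade it to almost sure and $L^2$ convergence with Proposition~\ref{prop:concentration}.

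\textbf{Limit in expectation.} Put $b_n=a_{L,n}+C_L$. By \cref{eq:almostadd}, $b_{m+n}\le b_m+b_n$, so $(b_n)$ is subadditive. Likewise $c_n=a_{L,n}-C_L$ satisfies $c_{m+n}\ge c_m+c_n$, so it is superadditive. Fekete's lemma then gives
\[
\lim b_n/n=\inf_n b_n/n, \qquad \lim c_n/n=\sup_n c_n/n,
\]
with values in $[-\infty,\infty]$. Both sequences differ from $a_{L,n}$ by the constant $C_L$, so the two limits coincide with $\lim a_{L,n}/n$ whenever it is finite. Finiteness follows from Lemma~\ref{lem:pointwise}. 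Taking expectations there gives
\[
-n\,\E|K_1|\le a_{L,n}\le 0, \qquad \E|K_1|=\sqrt{2v/\pi}.
\]
Hence $a_{L,n}/n$ is bounded, the limit $s_L(\beta)$ exists, and \cref{eq:pressurebound} holds. At $\beta=\infty$ the argument is identical, using the ground-state versions of both lemmas.

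\textbf{Almost sure and $L^2$ convergence.} Since every strip is a restriction of one Gaussian field, the variables $\Delta F_{L,n}$ are defined on a common probability space. Write $N_{L,n}=4[L(n-1)+(L-1)n]+n\le 9Ln$. Proposition~\ref{prop:concentration} gives
\[
\Pp\bigl(|\Delta F_{L,n}-a_{L,n}|\ge \varepsilon n\bigr)\le 2\exp\bigl(-\varepsilon^2 n/(18vL)\bigr).
\]
This is summable in $n$ for each fixed $\varepsilon>0$. Borel--Cantelli then yields $(\Delta F_{L,n}-a_{L,n})/n\to0$ almost surely, along the full sequence and not merely a subsequence. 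Combining this with the limit in expectation gives almost sure convergence to $s_L(\beta)$. For $L^2$, integrate the tail bound:
\[
\E(\Delta F_{L,n}-a_{L,n})^2=\int_0^\infty 2u\,\Pp(|\cdot|\ge u)\,\d u\le 4vN_{L,n}=O(Ln),
\]
so the $L^2$ norm of $(\Delta F_{L,n}-a_{L,n})/n$ is $O(n^{-1/2})$, which tends to zero. Together with the deterministic convergence $a_{L,n}/n\to s_L(\beta)$, this proves \cref{eq:stripas}. The zero-temperature case uses the zero-temperature clause of Proposition~\ref{prop:concentration} in exactly the same way.

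\textbf{Expected main obstacle.} The delicate step is confirming that the concentration bound truly applies with linear-in-$n$ variance uniformly in $n$, including at $\beta=\infty$. There $\Delta E_{L,n}$ is only piecewise differentiable, and one should invoke Gaussian concentration for globally Lipschitz functions rather than rely on pointwise derivatives. Once that is granted, as stated in Proposition~\ref{prop:concentration}, the remaining steps are routine.
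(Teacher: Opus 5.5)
Your proposal is correct and follows the paper's proof essentially step by step. Like the paper, you apply Fekete to the subadditive sequence $a_{L,n}+C_L$ and take expectations in the pointwise comparison to get the bound. You then use Proposition~\ref{prop:concentration} with $u=\varepsilon n$, Borel--Cantelli, and integration of the Gaussian tail for the $L^2$ statement. Your direct use of $\E|K_1|=\sqrt{2v/\pi}$, with the remark that the two-sided bound rules out a limit of $-\infty$, is slightly cleaner than the paper's appeal to the law of large numbers.
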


\begin{proof}
By \cref{lem:almostadd}, $b_n=a_{L,n}+C_L$ is subadditive. Fekete's lemma gives the existence of $\lim_n b_n/n$, hence of $\lim_n a_{L,n}/n$. The pointwise comparison in \cref{lem:pointwise} and the law of large numbers give \cref{eq:pressurebound}.
 
For fixed $L$, apply \cref{prop:concentration} with $u=\varepsilon n$. Since
\[
4[L(n-1)+(L-1)n]+n\leq (8L+1)n,
\]
the resulting upper bound is summable in $n$. Borel--Cantelli yields almost-sure convergence of $\Delta F_{L,n}/n$ to its deterministic mean limit. Integrating the same Gaussian tail gives convergence in $L^2$. The zero-temperature statement follows from the second part of \cref{prop:concentration}.
\end{proof}

\begin{remark}
Theorem \ref{thm:strip} is a thermodynamic limit in the tangential direction. It does not assert that $s_L(\beta)$ converges as $L\to\infty$. That normal-direction problem is precisely where low-temperature boundary states enter.
\end{remark}

\section{The boundary-localization criterion}

The strip pressure can be related to a half-space response without assuming Dobrushin uniqueness, provided one supplies a quantitative localization estimate. We formulate the estimate in a way that can be checked by coupling, metastate, or numerical boundary-response methods.

Let $\mathcal S_L=\{1,\ldots,L\}\times\mathbb Z$ denote the infinite strip with the bottom fields and with no interactions across its top. Let $p_L(r)$ denote the limiting expected pressure difference per tangential length obtained by the almost-additivity argument, with boundary-field amplitude $r$. The function $p_L$ is concave and is Lipschitz in $r$ (with constant $\E|K_0|$), hence absolutely continuous. At almost every $r>0$, define
\begin{equation}
u_L(r)=-\frac{p_L'(r)}{\beta v r}.
\end{equation}
We call a family of tangentially translation-covariant DLR states $(\mu_{L,r},\mu_{\Hh,r})_{r\in[0,1]}$ response-compatible when it is jointly measurable and, for almost every $r>0$,
\begin{equation}
u_L(r)=\E\left[1-\left\langle\sigma_{(1,0)}\right\rangle_{\mu_{L,r}}^2\right].
\label{eq:strip-response-state}
\end{equation}
For such a family set
\begin{equation}
u_{\Hh}(r)=\E\left[1-\left\langle\sigma_{(1,0)}\right\rangle_{\mu_{\Hh,r}}^2\right].
\end{equation}
We require $r\mapsto u_{\Hh}(r)$ to be measurable. The half-space is $\Hh=\{(x_1,x_2)\in\mathbb Z^2:x_1\geq1\}$ with boundary field $rK$ on $x_1=0$.
At exceptional values where $p_L'(r)$ is not defined (including $r=0$), we take $u_L(r)$ to be given by the displayed strip state-response expectation; this makes both response functions defined on $[0,1]$ without affecting any integral below.

\begin{definition}[Boundary localization]
We say that a response-compatible family satisfies the boundary-localization condition if there is a deterministic function $\rho(L)\downarrow 0$ such that
\begin{equation}
\sup_{0\leq r\leq 1}
\E\left|
\left\langle\sigma_{(1,0)}\right\rangle_{\mu_{L,r}}
-\left\langle\sigma_{(1,0)}\right\rangle_{\mu_{\Hh,r}}
\right|
\leq \rho(L).
\label{eq:BL}
\end{equation}
\end{definition}

We now give a sufficient condition which produces an explicit choice of $\rho$. For a finite region $V\subset\Hh$ and a boundary condition $\eta$, write $\langle\,\cdot\,\rangle_{V,r}^{\eta}$ for the Gibbs expectation with the bottom field $rK$ kept fixed and with $\eta$ prescribing the remaining external spins. The distance below is the nearest-neighbor distance in $\Z^2$.

\begin{definition}[Uniform exponential boundary mixing]\label{def:EBM}
Fix $\beta\in(0,\infty)$. We say that the environment satisfies uniform exponential boundary mixing at inverse temperature $\beta$ if there are $m_\beta>0$ and a nonnegative random variable $A_\beta$ with $\E A_\beta<\infty$ such that, almost surely, for every finite $V\subset\Hh$, every $r\in[0,1]$, every site $x\in V$, and every pair of admissible boundary conditions $\eta,\eta'$ which differ only on $D\subset\partial V$, one has
\begin{equation}
\left|\left\langle\sigma_x\right\rangle_{V,r}^{\eta}
-\left\langle\sigma_x\right\rangle_{V,r}^{\eta'}\right|
\leq A_\beta\sum_{y\in D}\exp\left(-m_\beta\,\dist(x,y)\right).
\label{eq:EBM}
\end{equation}
The constants are uniform in $V$, $r$, and the boundary conditions. Here an admissible boundary law is any probability mixture of fixed external-spin configurations, with arbitrary (possibly disorder-dependent) mixing weights; the free boundary law obtained by deleting the corresponding crossing interactions is included explicitly among the admissible laws. The same inequality is required after averaging over any two such laws whose supports differ only on $D$.
\end{definition}

This is a boundary-to-bulk mixing condition rather than a full one-site Dobrushin condition \cite{Georgii2011,Ruelle1969}. In particular, it only controls the influence of the remote boundary on the observable at $x$. The integrable random prefactor allows for the unbounded Gaussian couplings; a deterministic prefactor is the special case $A_\beta\equiv C_\beta$. The condition is a genuine additional hypothesis in the low-temperature Gaussian EA model, not a consequence of the finite-volume interpolation identity.

For a low-temperature regime one fixes $\beta_0>0$ and asks for \cref{eq:EBM} at every finite $\beta\geq\beta_0$, with constants allowed to depend on $\beta$. If $m_\beta$ and $\E A_\beta$ can be chosen uniformly on a compact inverse-temperature interval, then the localization rate in \cref{eq:rho-explicit} is uniform on that interval as well.

\begin{theorem}[Surface limit under boundary localization]\label{thm:localization}
Assume the boundary-localization condition \cref{eq:BL} for a response-compatible jointly measurable family. Then, for every finite $\beta$,
\begin{equation}
\lim_{L\to\infty}s_L(\beta)
= -\beta v\int_0^1 r\,u_{\Hh}(r)\d r.
\label{eq:halfspacepressure}
\end{equation}
The limit is independent of the selected half-space state whenever the right-hand side is selection-independent.
\end{theorem}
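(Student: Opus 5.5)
The plan is to write $s_L(\beta)$ as an integral of the strip response $u_L$ and then compare $u_L$ with $u_{\Hh}$ using \cref{eq:BL}. The first step is to identify $s_L(\beta)$ with $p_L$. By definition $p_L(r)$ is the limit of $\E\Delta F_{L,n}(r)/n$, which exists for each $r$ by the argument of \cref{thm:strip}. Since $\Delta F_{L,n}(0)=0$, we have $s_L(\beta)=p_L(1)=p_L(1)-p_L(0)$. Moreover $p_L$ is Lipschitz, hence absolutely continuous, so
\[
s_L(\beta)=\int_0^1 p_L'(r)\d r=-\beta v\int_0^1 r\,u_L(r)\d r,
\]
using the definition $u_L(r)=-p_L'(r)/(\beta v r)$ at almost every $r>0$. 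This is the infinite-strip analogue of \cref{prop:interpolation}. No uniqueness is needed: response-compatibility, \cref{eq:strip-response-state}, then lets us write $u_L(r)=\E[1-\langle\sigma_{(1,0)}\rangle_{\mu_{L,r}}^2]$ for almost every $r$. Modifying $u_L$ at $r=0$ and on the null exceptional set does not change the integral.

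The second step compares the two responses. Write $a=\langle\sigma_{(1,0)}\rangle_{\mu_{L,r}}$ and $b=\langle\sigma_{(1,0)}\rangle_{\mu_{\Hh,r}}$. Both lie in $[-1,1]$, so
\[
|u_L(r)-u_{\Hh}(r)|=|\E(b^2-a^2)|\le \E|a-b|\,|a+b|\le 2\E|a-b|\le 2\rho(L)
\]
for almost every $r$, uniformly in $r\in[0,1]$. Here \cref{eq:BL} is applied for each fixed $r$; joint measurability of the family makes $r\mapsto u_{\Hh}(r)$ and the integrands measurable, so Fubini is legitimate. Since $u_{\Hh}$ is bounded by $1$ and measurable, we get
\[
\left|s_L(\beta)+\beta v\int_0^1 r\,u_{\Hh}(r)\d r\right|\le \beta v\int_0^1 2r\rho(L)\d r=\beta v\,\rho(L)\to0,
\]
which proves \cref{eq:halfspacepressure}. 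The final sentence of the theorem is then immediate: the left side of \cref{eq:halfspacepressure} does not depend on the chosen states, so the limit coincides with the right side for every admissible selection, and is therefore selection-independent whenever that right side is.

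The main obstacle is the first step, justifying that the derivative of the limiting pressure $p_L$ is an almost-everywhere limit of finite-volume responses. The paper packages this into the response-compatibility hypothesis \cref{eq:strip-response-state}, so the proof can take it as given. If one wanted to verify it directly, the route would be as follows. First use concavity of $r\mapsto\E\log Z_{L,n,r}^{\fix}/n$; this is convex in $r$ at fixed disorder, and the paper states $p_L$ is concave, so one must track the sign convention. Convex (or concave) functions converging pointwise have derivatives converging at every differentiability point of the limit. The finite-volume derivatives are the averaged responses from \cref{prop:interpolation}. One then extracts translation-covariant limit states, for instance by averaging over tangential shifts. The careful point is that this only identifies $u_L$ through some limit state, which is exactly what response-compatibility asserts.
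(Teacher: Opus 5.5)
Your proof is correct and follows the paper's route. You write $s_L(\beta)=p_L(1)-p_L(0)=-\beta v\int_0^1 r\,u_L(r)\d r$ via absolute continuity and response-compatibility, then use $|a^2-b^2|\le 2|a-b|$ on $[-1,1]$ together with \cref{eq:BL}. The only difference is cosmetic: where the paper invokes dominated convergence, you integrate the uniform bound directly and get the explicit rate $\beta v\rho(L)$, which is the same estimate the paper later uses in the proof of \cref{thm:exp-localization}.
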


\begin{proof}
Absolute continuity of $p_L$, the response-compatibility identity, and $p_L(0)=0$ give
\begin{equation}
s_L(\beta)=-\beta v\int_0^1 r\,u_L(r)\d r.
\label{eq:stripresponse}
\end{equation}
This identity is an equality of integrals; the derivative statement is needed only almost everywhere. Since $|x^2-y^2|\leq2|x-y|$ on $[-1,1]$, \cref{eq:BL} implies
\begin{equation}
|u_L(r)-u_{\Hh}(r)|\leq2\rho(L)\quad\text{for almost every }r\in[0,1].
\end{equation}
The integrands are bounded by one, and dominated convergence in \cref{eq:stripresponse} proves \cref{eq:halfspacepressure}.
\end{proof}

\begin{lemma}[Top-boundary summation]
For $m>0$ and $L\geq1$, let
\begin{equation}
T_L=\{(L+1,j):j\in\Z\}.
\end{equation}
Then, with $x_0=(1,0)$,
\begin{equation}
\sum_{y\in T_L}\exp\left(-m\,\dist(x_0,y)\right)
=c(m)\exp(-mL),
\qquad
c(m)=\frac{1+\exp(-m)}{1-\exp(-m)}.
\label{eq:top-sum}
\end{equation}
\end{lemma}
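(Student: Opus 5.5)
Since the nearest-neighbor distance in $\Z^2$ is the $\ell^1$ graph distance, the first step is to write it out explicitly. For $y=(L+1,j)\in T_L$ and $x_0=(1,0)$,
\[
\dist(x_0,y)=|L+1-1|+|j-0|=L+|j|.
\]
Here we use that the graph distance on $\Z^2$ with nearest-neighbor edges equals the $\ell^1$ norm of the coordinate difference. A monotone lattice path achieves this value, and no path can be shorter, because each step changes exactly one coordinate by one.

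The factor $\exp(-mL)$ then comes out of the sum, leaving a two-sided geometric series:
\[
\sum_{y\in T_L}e^{-m\,\dist(x_0,y)}=e^{-mL}\sum_{j\in\Z}e^{-m|j|}.
\]
We split the index set into $j=0$, $j\geq1$ and $j\leq-1$. The two tails are equal, and each is $\sum_{j\geq1}e^{-mj}=e^{-m}/(1-e^{-m})$. This series converges absolutely because $m>0$. Adding the $j=0$ term gives
\[
1+\frac{2e^{-m}}{1-e^{-m}}=\frac{1+e^{-m}}{1-e^{-m}}=c(m),
\]
which is \cref{eq:top-sum}.

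There is no real obstacle. The only point needing care is the identification $\dist(x_0,y)=L+|j|$, in particular the normal offset $L$ rather than $L+1$, since $x_0$ sits in row $1$ and $T_L$ in row $L+1$.

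The sum is taken over all of $T_L$, not only over sites adjacent to a finite $V$. It therefore gives an upper bound for any subset $D\subset T_L$, because all terms are nonnegative. This is presumably how the lemma will be combined with \cref{eq:EBM} to produce an explicit $\rho(L)$.
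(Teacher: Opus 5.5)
Your proposal is correct and follows the paper's proof: it identifies $\dist(x_0,y)=L+|j|$, factors out $e^{-mL}$, and sums the two-sided geometric series to obtain $c(m)$. Your added justification of the $\ell^1$ distance and the remark on monotonicity over subsets $D\subset T_L$ are accurate supplements to the paper's brief argument.
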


\begin{proof}
For $y=(L+1,j)$ one has $\dist(x_0,y)=L+|j|$. Summing the resulting geometric series gives
\begin{equation}
\sum_{j\in\Z}\exp\left(-m(L+|j|)\right)
=\exp(-mL)\left(1+2\sum_{j\geq1}\exp(-mj)\right),
\end{equation}
which is \cref{eq:top-sum}.
\end{proof}

\begin{theorem}[Exponential localization from boundary mixing]\label{thm:exp-localization}
Fix $\beta\in(0,\infty)$ and suppose that uniform exponential boundary mixing \cref{eq:EBM} holds. Then the strip and half-space specifications have unique DLR states, denoted by $\mu_{L,r}$ and $\mu_{\Hh,r}$, and these states form a jointly measurable response-compatible family. Moreover,
\begin{equation}
\sup_{0\leq r\leq1}
\E\left|
\left\langle\sigma_{(1,0)}\right\rangle_{\mu_{L,r}}
-\left\langle\sigma_{(1,0)}\right\rangle_{\mu_{\Hh,r}}
\right|
\leq \rho_\beta(L),
\qquad
\rho_\beta(L)=c(m_\beta)\,\E A_\beta\,\exp(-m_\beta L).
\label{eq:rho-explicit}
\end{equation}
In particular, the boundary-localization condition holds with an exponential rate. Moreover, if
\begin{equation}
\tau_\beta^{\Hh}=-\beta v\int_0^1r\,u_{\Hh}(r)\d r,
\end{equation}
then
\begin{equation}
\left|s_L(\beta)-\tau_\beta^{\Hh}\right|
\leq \beta v\,c(m_\beta)\,\E A_\beta\,\exp(-m_\beta L).
\label{eq:surface-rate}
\end{equation}
\end{theorem}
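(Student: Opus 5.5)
The plan has four steps: uniqueness of the DLR states, the comparison estimate at the bottom site, response-compatibility, and finally the rate for $s_L$ from \cref{thm:localization}.

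\emph{Uniqueness.} For the half-space, fix $r$ and a finite box $V_k\uparrow\Hh$ containing $x$. Any DLR state $\mu$ for the half-space specification gives $\langle\sigma_x\rangle_\mu$ as an average of $\langle\sigma_x\rangle^{\eta}_{V_k,r}$ over $\eta$ drawn from $\mu$. This is an admissible mixture, so \cref{eq:EBM} compares it with the free law on $V_k$. The boundary $D\subset\partial V_k$ lies at distance at least $\dist(x,\partial V_k)$ from $x$. The sum $\sum_{y\in\partial V_k}e^{-m_\beta\dist(x,y)}$ tends to $0$ because the boundary grows only polynomially while the exponential decays.

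\begin{itemize}
\item Hence all DLR states agree on one-point functions.
\item For general local functions, the same argument applies to products $\sigma_A$. One can either redo the mixture comparison for the finite set $A$, or note that one-site control inside each conditional box propagates by conditioning on all but one site inductively.
\end{itemize}

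This gives uniqueness, and $\mu_{\Hh,r}$ is the limit of free finite-volume states. The strip is treated identically, since strip boxes are subsets of $\Hh$ and the top free boundary is an admissible law. Joint measurability in $(r,K,J)$ follows because these states are pointwise limits of finite-volume Gibbs expectations, which are continuous in $(r,K,J)$.

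\emph{Main estimate.} Take the box $V=\{1,\ldots,L\}\times\{-k,\ldots,k\}$. Compare two boundary laws on it:
\begin{itemize}
\item the one induced by $\mu_{L,r}$, which is free on top and equal to the strip state's own conditional law on the sides;
\item the one induced by $\mu_{\Hh,r}$, namely the $\mu_{\Hh,r}$-conditional law on the top row $T_L$ together with the sides.
\end{itemize}
A cleaner route is to first compare both states with the finite-volume expectations on $V$ under side boundary laws, then let $k\to\infty$ so that the side contributions vanish. What remains is a pair of laws differing only on $D=T_L\cap\partial V$. Then \cref{eq:EBM} and the top-boundary summation lemma give
\[
\left|\langle\sigma_{(1,0)}\rangle_{\mu_{L,r}}-\langle\sigma_{(1,0)}\rangle_{\mu_{\Hh,r}}\right|\le A_\beta c(m_\beta)e^{-m_\beta L}.
\]
Taking expectations and the supremum over $r$ gives \cref{eq:rho-explicit}.

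\emph{Main obstacle: response-compatibility.} I expect this to be the hardest step. I would apply \cref{prop:interpolation} to the strip $\Lambda_{L,n}$ at amplitudes $r$ and use tangential translation to compare $m_{L,n,j}(r)$ with $\langle\sigma_{(1,j)}\rangle_{\mu_{L,r}}$.
\begin{itemize}
\item For sites with $\min(j,n-j)\ge \ell$, mixing bounds the difference by $A c e^{-m\ell}$, using the side free boundary as the admissible law. Boundary sites contribute $O(\ell)$.
\item Hence $\frac1n\E\Delta F_{L,n}(r)\to -\beta v\int_0^r t\,\E[1-\langle\sigma_{(1,0)}\rangle^2_{\mu_{L,t}}]\d t$ by dominated convergence. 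Here I apply the proposition with endpoint $r$ by rescaling the interpolation parameter, and use tangential stationarity of the environment.
\end{itemize}
This identifies $p_L(r)$ as that integral. Differentiating at Lebesgue points of the integrand, which holds almost everywhere, gives $u_L(r)=\E[1-\langle\sigma_{(1,0)}\rangle^2_{\mu_{L,r}}]$, which is \cref{eq:strip-response-state}.

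\emph{Rate for $s_L$.} With response-compatibility in hand, \cref{thm:localization}'s identity \cref{eq:stripresponse} and the bound $|u_L-u_\Hh|\le2\rho_\beta(L)$ give
\[
|s_L-\tau^\Hh_\beta|\le\beta v\int_0^1 r\cdot 2\rho_\beta(L)\d r=\beta v\rho_\beta(L),
\]
which is exactly \cref{eq:surface-rate}.
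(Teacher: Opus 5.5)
Your proposal is correct and follows the paper's proof. Uniqueness comes from exhaustion, with the polynomially growing boundary sum killed by $e^{-m_\beta R}$ and exposed spins absorbed into the boundary law. The main estimate comes from disintegrating both states over the cylinder $\{1,\ldots,L\}\times\{-R,\ldots,R\}$, letting the lateral terms vanish, and summing over the top row. The rate then follows from the strip response identity.

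The only variation is in response-compatibility. You pass to the limit in the integrated identity $\frac1n\E\Delta F_{L,n}(r)=-\beta v\int_0^r t\,\frac1n\sum_j\E[1-m_{L,n,j}(t)^2]\d t$ and then differentiate at Lebesgue points. The paper instead uses convergence of the derivatives $p_{L,n}'(r)$ of concave functions at differentiability points of $p_L$. Both give the identity for almost every $r$ (the paper's off a countable set), and your route yields the integrated formula for $s_L(\beta)$ directly.
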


\begin{proof}
We first record why the state and response assertions are part of the conclusion. Exhaust either $\mathcal S_L$ or $\Hh$ by deterministic finite cylinders. For a fixed site $x$, the sum in \cref{eq:EBM} over the boundary of a cylinder whose tangential cutoff is $R$ is bounded by a polynomial in $R$ times $e^{-m_\beta R}$ (the bottom boundary is kept fixed). Hence two subsequential DLR limits have identical one-site expectations at every $x$. Applying the same estimate successively to the sites of any fixed finite set, with already exposed spins absorbed into the boundary law, gives equality of all cylinder marginals and therefore equality of the states. Compactness gives existence. Finite-volume expectations are measurable in $(r,\omega)$, so their unique pointwise limits give joint measurability; translating the exhaustion gives tangential covariance.

For the strip, the finite-volume Gaussian identity gives
\[
p_{L,n}'(r)=-\beta v r\,\frac1n\sum_{j=1}^n
\E\left[1-m_{L,n,j}(r)^2\right].
\]
Applying \cref{eq:EBM} to a site at tangential distance $h$ from either lateral cutoff bounds the difference between its finite-cylinder magnetization and the unique strip-state magnetization by $A_\beta C_L(e^{-m_\beta h})$, where $C_L$ is a finite geometric sum over the $L$ layers. The average of these errors is $O(1/n)$ after expectation. Convergence of concave functions and their derivatives at every differentiability point of $p_L$ therefore gives the response-compatibility identity \cref{eq:strip-response-state} for almost every $r>0$.

Now set $C_R=\{1,\ldots,L\}\times\{-R,\ldots,R\}$. Disintegrate the half-space state over spins outside $C_R$; its conditional law induces a (possibly random) admissible boundary law on the top and lateral sides of $C_R$. Disintegrate the strip state similarly. The two conditional laws agree on the bottom field. Their boundary laws can differ on the top set
\[
T_{L,R}=\{(L+1,j):-R\leq j\leq R\}
\]
and on the two lateral sets. The mixing hypothesis, including its mixture clause, gives for $x_0=(1,0)$
\begin{equation}
\left|
\left\langle\sigma_{x_0}\right\rangle_{\mu_{L,r}}
-\left\langle\sigma_{x_0}\right\rangle_{\mu_{\Hh,r}}
\right|
\leq A_\beta\left(\sum_{y\in T_{L,R}}e^{-m_\beta\dist(x_0,y)}+2L e^{-m_\beta(R+1)}\right).
\end{equation}
The second term explicitly bounds the two lateral faces, each having $L$ sites at distance at least $R+1$. Letting $R\to\infty$ and using \cref{eq:top-sum}, then taking expectation, proves \cref{eq:rho-explicit}, uniformly in $r$.

Since $|a^2-b^2|\leq2|a-b|$ for $a,b\in[-1,1]$, \cref{eq:rho-explicit} yields
\begin{equation}
|u_L(r)-u_{\Hh}(r)|
\leq2\rho_\beta(L)\quad\text{for almost every }r\in[0,1].
\end{equation}
Combining this estimate with the strip response identity in the proof of \cref{thm:localization} gives
\begin{equation}
\left|s_L(\beta)-\tau_\beta^{\Hh}\right|
\leq\beta v\int_0^1 2r\rho_\beta(L)\d r
=\beta v\rho_\beta(L),
\end{equation}
which is \cref{eq:surface-rate}.
\end{proof}

No full-cube surface formula is claimed here. The single-face estimate above does not by itself decompose the simultaneous free-to-fixed correction into four face terms: boundary conditions on one face can influence an entire other face. A valid four-face theorem would require a uniform mixing hypothesis with arbitrary side boundary laws and a telescoping decomposition whose non-face remainder is $O(1)$ in expectation. Under that additional hypothesis, the crossing-edge normalization defined in the introduction would give a coefficient $1/4$ in front of the sum of the four oriented face pressures, since $|\partial_{\rm e}\Lambda_N|=4N$.

\begin{remark}
Theorem \ref{thm:localization} is a criterion, not a proof of \cref{eq:BL} for the low-temperature Gaussian EA model. Establishing \cref{eq:BL}, or constructing a sequence for which it fails, remains the central normal-direction problem.
\end{remark}

\section{General dimensions and coupling laws}

The tangential argument is not specific to two dimensions or to Gaussian disorder. We record the corresponding formulation, separating the assumptions needed for the pressure limit from those used only for Gaussian concentration and interpolation.

\subsection{Tangential pressure and conditional localization}

Let $d\geq2$, let $k=d-1$, and write
\begin{equation}
\Lambda_{L,\mathbf n}^{(d)}
=\{1,\ldots,L\}\times\prod_{i=2}^{d}\{1,\ldots,n_i\},
\qquad
|\mathbf n|=\prod_{i=2}^{d}n_i.
\label{eq:d-slab}
\end{equation}
The bottom face has $|\mathbf n|$ independent boundary fields. Let the internal couplings be i.i.d. with a symmetric law $\nu$ and let the boundary fields have a symmetric law $\nu_{\mathrm{b}}$, independent of the internal disorder. Set
\begin{equation}
\mu_{\mathrm{int}}=\int_{\mathbb R}|x|\,\nu(\d x),
\qquad
\mu_{\mathrm{b}}=\int_{\mathbb R}|x|\,\nu_{\mathrm{b}}(\d x).
\end{equation}
For the pressure limit we require only $\mu_{\mathrm{int}}+\mu_{\mathrm{b}}<\infty$. The notation $F_{L,\mathbf n,r}^{\nu}$ and $\Delta F_{L,\mathbf n}^{\nu}$ refers to the same free-to-fixed construction as before, with the Gaussian law replaced by $\nu$ and $\nu_{\mathrm b}$. For vectors $\mathbf n,\mathbf m$ with $n_j=m_j$ for $j\ne i$, the concatenation $\mathbf n\oplus_i\mathbf m$ is the vector with $i$th coordinate $n_i+m_i$ and all other coordinates $n_j$. Products indexed by $j\ne i$ below mean $j\in\{2,\ldots,d\}\setminus\{i\}$.

\begin{lemma}[Multidimensional tangential almost-additivity]
Let $a_{L,\mathbf n}^{\nu}=\E\Delta F_{L,\mathbf n}^{\nu}$ at finite temperature, or let it denote the corresponding expected ground-state correction at zero temperature. If two tangential boxes are concatenated in direction $i\in\{2,\ldots,d\}$, then
\begin{equation}
\left|a_{L,\mathbf n\oplus_i\mathbf m}^{\nu}
-a_{L,\mathbf n}^{\nu}-a_{L,\mathbf m}^{\nu}\right|
\leq 2L\mu_{\mathrm{int}}\prod_{j\in\{2,\ldots,d\}\setminus\{i\}}n_j,
\label{eq:d-almostadd}
\end{equation}
where the two rectangles have the same side lengths in the directions transverse to $i$.
\end{lemma}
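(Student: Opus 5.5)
The plan is to repeat the proof of \cref{lem:almostadd} with the cut taken as a hyperplane instead of a single vertical line. Realize the concatenated slab $\Lambda_{L,\mathbf n\oplus_i\mathbf m}^{(d)}$ from one i.i.d.\ field of internal couplings (law $\nu$) and boundary fields (law $\nu_{\mathrm b}$). Cut it by the hyperplane separating coordinate $n_i$ from $n_i+1$ in direction $i$. The internal bonds crossing this cut join $(x_1,\ldots,x_i=n_i,\ldots)$ to $(x_1,\ldots,x_i=n_i+1,\ldots)$. Their number is therefore $L\prod_{j\ne i}n_j$: there are $L$ choices of the normal coordinate and $n_j$ choices in each transverse tangential direction. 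This relies on the two boxes having equal side lengths in the directions $j\ne i$, which is the hypothesis of the statement.

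Next, let $\Delta F^{\mathrm{cut}}$ be the free-to-fixed correction with all crossing couplings set to zero. The cut Hamiltonians, both free and fixed, split as sums over the two sub-slabs. Every bottom boundary field belongs to exactly one sub-slab, so both partition functions factorize and $\Delta F^{\mathrm{cut}}$ is the sum of the two sub-slab corrections. The sub-slab disorders are disjoint independent families with the laws defining $a^\nu_{L,\mathbf n}$ and $a^\nu_{L,\mathbf m}$, so
\[
\E\Delta F^{\mathrm{cut}}=a^\nu_{L,\mathbf n}+a^\nu_{L,\mathbf m}.
\]
Both terms are finite by the multidimensional analogue of \cref{lem:pointwise}, which uses only global spin-flip symmetry and gives $|\Delta F|\le\sum|K|$, integrable since $\mu_{\mathrm b}<\infty$. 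At zero temperature the ground-state energies likewise split additively.

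For the comparison, note that $\log Z$ satisfies $|\partial_{J_a}(-\beta^{-1}\log Z)|=|\langle\sigma_x\sigma_y\rangle|\le1$ for both the free and the fixed free energy. Changing one coupling from $J_a$ to $0$ therefore moves each free energy by at most $|J_a|$ and the difference by at most $2|J_a|$. At $\beta=\infty$ the same bound holds because $\min_\sigma H$ is $1$-Lipschitz in each coupling. Switching off the couplings one at a time along the cut gives
\[
|\Delta F-\Delta F^{\mathrm{cut}}|\le 2\sum_{a\text{ crossing}}|J_a|.
\]
Taking expectations, with $\E|J_a|=\mu_{\mathrm{int}}$, yields the bound \cref{eq:d-almostadd}. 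No Gaussian structure is used anywhere, only symmetry for the pointwise integrability and finiteness of $\mu_{\mathrm{int}}$.

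The only step needing care is the integrability and equality-in-law bookkeeping for non-Gaussian $\nu$. Specifically, one must check that the expectations exist and that the sub-slab restrictions have exactly the laws used to define $a^\nu$. I expect this to go through directly from independence and $\mu_{\mathrm{int}}+\mu_{\mathrm b}<\infty$. It requires $\Delta F\in L^1$, which I would justify first via the spin-flip comparison bound.
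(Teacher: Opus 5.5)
Your proposal is correct and follows the paper's proof essentially step for step. Cut along the hyperplane in direction $i$, count the $L\prod_{j\ne i}n_j$ crossing bonds, factorize after setting them to zero, and apply the one-bond bound (each free or ground-state energy moves by at most $|J_a|$) before taking expectations; your integrability check via $|\Delta F|\le\sum|K|$ is a harmless addition that the paper leaves implicit.
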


\begin{proof}
Cutting in direction $i$ removes $L\prod_{j\in\{2,\ldots,d\}\setminus\{i\}}n_j$ internal bonds. Setting these couplings to zero factorizes the Hamiltonian. Changing one coupling changes each of the two free energies by at most its absolute value, and the same estimate holds for ground-state energies. Taking expectations gives \cref{eq:d-almostadd}.
\end{proof}

\begin{theorem}[Fixed-thickness pressure in general dimension]\label{thm:d-pressure}
For every $d\geq2$, every fixed $L\geq1$, and every $\beta\in(0,\infty]$, the van Hove limit
\begin{equation}
s_{L,\nu}^{(d)}(\beta)
=\lim_{\min_i n_i\to\infty}
\frac{\E\Delta F_{L,\mathbf n}^{\nu}}{|\mathbf n|}
\label{eq:d-pressure}
\end{equation}
exists. At zero temperature, $\Delta F$ is interpreted as the ground-state correction. Moreover,
\begin{equation}
-\mu_{\mathrm b}\leq s_{L,\nu}^{(d)}(\beta)\leq0.
\label{eq:d-pressure-bound}
\end{equation}
If the product law of all internal and boundary variables satisfies, for some finite $\kappa>0$, the concentration property
\begin{equation}
\Pp\left(|f-\E f|\geq t\right)
\leq2\exp\left(-\frac{t^2}{\kappa\sum_a c_a^2}\right)
\label{eq:product-concentration}
\end{equation}
for every coordinatewise Lipschitz function with coordinate constants $c_a$ \cite{Ledoux2001,BoucheronLugosiMassart2013}, then the normalized corrections converge almost surely and in $L^2$ along tangential cubes $n_2=\cdots=n_d\to\infty$.
\end{theorem}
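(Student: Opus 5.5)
The plan is to follow the proof of \cref{thm:strip}, replacing Fekete's lemma by a multiparameter subadditivity argument suited to van Hove limits of boxes. Write $a_{\mathbf n}=a_{L,\mathbf n}^{\nu}$ and $k=d-1$.

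First we record the a priori bounds. The proof of \cref{lem:pointwise} uses only global spin-flip invariance, so it applies verbatim to any law: $-\sum_j|K_j|\leq\Delta F_{L,\mathbf n}^{\nu}\leq0$. Taking expectations gives $-\mu_{\mathrm b}|\mathbf n|\leq a_{\mathbf n}\leq0$. Hence $a_{\mathbf n}/|\mathbf n|$ lies in $[-\mu_{\mathrm b},0]$, and \cref{eq:d-pressure-bound} will follow once the limit exists.

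For existence, fix a reference side $p$ and let $\bar s_p=a_{(p,\ldots,p)}/p^{k}$. Given $\mathbf n$, write $n_i=q_ip+r_i$ with $0\leq r_i<p$. Partition the box into $\prod_iq_i$ cubes of side $p$ together with a remainder consisting of at most $k$ slabs of thickness less than $p$. We assemble the box by repeatedly applying the multidimensional almost-additivity lemma \cref{eq:d-almostadd}, one direction at a time. Concatenating in direction $i$ costs $2L\mu_{\mathrm{int}}$ times the cross-sectional area, which is at most $|\mathbf n|/n_i$ in the final gluing and smaller in the earlier ones. Summing over the cuts in direction $i$ (there are at most $q_i$ of them), the total error is bounded by
\[
2L\mu_{\mathrm{int}}\sum_{i}(q_i+1)\frac{|\mathbf n|}{n_i}\leq 2L\mu_{\mathrm{int}}|\mathbf n|\sum_i\Bigl(\frac1p+\frac1{n_i}\Bigr).
\]
The remainder slabs contribute at most $\mu_{\mathrm b}\,|\mathbf n|\sum_i p/n_i$ in absolute value, by the pointwise bound above. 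This yields
\[
\Bigl|\frac{a_{\mathbf n}}{|\mathbf n|}-\bar s_p\Bigr|\leq C\Bigl(\frac{L\mu_{\mathrm{int}}}{p}+\sum_i\frac{p}{n_i}\Bigr),
\]
with $C$ depending on $d$, $\mu_{\mathrm{int}}$ and $\mu_{\mathrm b}$. Letting $\min_in_i\to\infty$ shows that $\limsup$ and $\liminf$ of $a_{\mathbf n}/|\mathbf n|$ differ by $O(L\mu_{\mathrm{int}}/p)$. Sending $p\to\infty$ gives the limit \cref{eq:d-pressure}, and shows that $\bar s_p$ converges to the same value. Zero temperature is handled identically, since the pointwise and one-bond estimates hold for ground states.

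The step I expect to be the main bookkeeping obstacle is making the error count precise when boxes of unequal shapes are glued. The almost-additivity lemma requires matching transverse sides, so the order of gluings must be fixed: first build rows in direction $2$, then stack those rows in direction $3$, and so on. Only then do the cross sections telescope to the stated bound.

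For self-averaging along cubes $n_2=\cdots=n_d=n$, the free energies are coordinatewise Lipschitz. The constant is $c_a=2$ for internal couplings and $c_a=1$ for boundary fields, by the argument of \cref{prop:concentration}; this uses only the one-coupling perturbation bound and needs no differentiability. Hence $\sum_ac_a^2\leq C_dLn^{k}$. Applying \cref{eq:product-concentration} with $t=\varepsilon n^{k}$ gives the tail bound $2\exp(-\varepsilon^2n^{k}/(\kappa C_dL))$, which is summable in $n$. Borel--Cantelli together with the convergence of the means then gives almost sure convergence. Integrating the tail shows that $\E|\Delta F/n^k-\E\Delta F/n^k|^2=O(n^{-k})\to0$, which gives $L^2$ convergence.

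One subtlety should be noted. \cref{eq:product-concentration} as stated requires $f$ to be Lipschitz in each coordinate with constant $c_a$ uniformly, which we have. Finite first moment is not needed for this step, only for the existence of the means.
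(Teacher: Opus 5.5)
Your proposal is correct and follows essentially the paper's proof. You tile the tangential box by cubes of side $p$ (the paper's $q$), charge $O(kL\mu_{\mathrm{int}}/p)$ per unit tangential volume for the seams and $\mu_{\mathrm b}$ times the remainder volume via the pointwise bound, let $p\to\infty$, and then use coordinate constants $2$ and $1$ with $t=\varepsilon n^{k}$ and Borel--Cantelli for the cube statement. The only difference is bookkeeping: the paper removes all seam bonds and the core--remainder interface at once with the one-bond comparison, so no matching of transverse sides is needed, whereas you iterate the two-box lemma direction by direction, which works for the reason you give.
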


\begin{proof}
The pointwise comparison \cref{lem:pointwise} uses only global spin-flip symmetry and gives
\begin{equation}
-\sum_{x\text{ on the bottom face}}|K_x|
\leq\Delta F_{L,\mathbf n}^{\nu}\leq0.
\end{equation}
Fix $q\geq1$ and let $Q_{\mathbf n}=\prod_{i=2}^d\{1,\ldots,q\lfloor n_i/q\rfloor\}$ be the tiled core. It consists of
\(M_{\mathbf n}=\prod_{i=2}^d\lfloor n_i/q\rfloor\) disjoint translates of the $q^k$ tangential box. Cut every internal seam between these boxes and every bond joining the core to its remainder $R_{\mathbf n}$. The number of seam bonds inside the core is at most
\[
L\sum_{i=2}^d\left(\lfloor n_i/q\rfloor-1\right)\prod_{j\in\{2,\ldots,d\}\setminus\{i\}}q\lfloor n_j/q\rfloor
\leq \frac{kL}{q}|\mathbf n|,
\]
and the core--remainder interface has at most $L\sum_{i=2}^d\prod_{j\in\{2,\ldots,d\}\setminus\{i\}}n_j=o(|\mathbf n|)$ bonds along van Hove boxes. The one-bond comparison used in Lemma~\ref{lem:almostadd} therefore gives
\[
\left|a_{L,\mathbf n}^{\nu}-M_{\mathbf n}a_{L,\mathbf q}^{\nu}\right|
\leq 2\mu_{\mathrm{int}}\left(\frac{kL}{q}|\mathbf n|+o(|\mathbf n|)\right)+\mu_{\mathrm b}|R_{\mathbf n}|.
\]
Here the last term is the pointwise boundary comparison on the (possibly disconnected) remainder, and $|R_{\mathbf n}|/|\mathbf n|\to0$ for fixed $q$. Since $M_{\mathbf n}q^k/|\mathbf n|\to1$, taking limsup and liminf yields an interval of length at most $4kL\mu_{\mathrm{int}}/q$. Letting $q\to\infty$ proves existence of the limit. Dividing the pointwise comparison by $|\mathbf n|$ and applying the law of large numbers proves \cref{eq:d-pressure-bound}.

For the last assertion, the coordinate Lipschitz constants of the free-to-fixed correction are at most $2$ for internal couplings and $1$ for boundary couplings. Along tangential cubes, the number of coordinates is $O_{d,L}(|\mathbf n|)$, so \cref{eq:product-concentration} with $t=\varepsilon|\mathbf n|$ gives a summable tail bound. Borel--Cantelli and integration of the tail yield almost-sure and $L^2$ convergence along those cubes.
\end{proof}

For a lattice direction $\xi\in\{\pm e_1,\ldots,\pm e_d\}$, write $\mathbb H_d^\xi=\{x\in\mathbb Z^d:x\cdot\xi\geq1\}$ and obtain its specification by the corresponding coordinate rotation; the notation below omits $\xi$ and uses $\xi=e_1$. Let $p_{L,\nu}^{(d)}(r)=\lim_{\min_i n_i\to\infty}\E\Delta F_{L,\mathbf n,r}^{\nu}/|\mathbf n|$, whose existence follows from the preceding block argument. We call a jointly measurable tangentially translation-covariant strip family response-compatible when
\[
\bigl(p_{L,\nu}^{(d)}\bigr)'(r)
=-\E\left[K_0\left\langle\sigma_{(1,0,\ldots,0)}\right\rangle_{\mu_{L,r}}\right]
\]
for almost every $r\in(0,1)$. For such a family, absolute continuity of the pressure gives the elementary boundary-response formula
\begin{equation}
s_{L,\nu}^{(d)}(\beta)
=-\int_0^1\E\left[K_0\left\langle\sigma_{(1,0,\ldots,0)}\right\rangle_{\mu_{L,r}}\right]\d r,
\label{eq:general-response}
\end{equation}
since $p_{L,\nu}^{(d)}(0)=0$. The Gaussian identity \cref{eq:interpolation} is recovered from \cref{eq:general-response} by Gaussian integration by parts.

\begin{proposition}[General-dimensional exponential localization]\label{prop:d-localization}
Let $\mathbb H_d=\{(x_1,z)\in\mathbb Z\times\mathbb Z^{d-1}:x_1\geq1\}$. Assume the boundary-mixing condition defined above, with finite $V\subset\mathbb H_d$, nearest-neighbor distance in $\mathbb Z^d$, and all admissible boundary laws. Suppose also that
\[
M_{\beta,K}:=\sup_{z\in\mathbb Z^{d-1}}\E[|K_z|A_\beta]<\infty.
\]
Then the slab and half-space DLR states are unique, jointly measurable, and response-compatible, and
\begin{equation}
\sup_{0\leq r\leq1}
\E\left|
\left\langle\sigma_{(1,0,\ldots,0)}\right\rangle_{\mu_{L,r}}
-\left\langle\sigma_{(1,0,\ldots,0)}\right\rangle_{\mu_{\mathbb H_d,r}}
\right|
\leq \rho_{\beta,d}(L),
\label{eq:d-rho}
\end{equation}
where
\begin{equation}
\rho_{\beta,d}(L)
=c_{d-1}(m_\beta)\,\E A_\beta\,\exp(-m_\beta L),
\qquad
c_{d-1}(m)=\left(\frac{1+\exp(-m)}{1-\exp(-m)}\right)^{d-1}.
\label{eq:d-rho-explicit}
\end{equation}
Moreover, with
\begin{equation}
\tau_{\beta,\nu}^{\mathbb H_d}
=-\int_0^1\E\left[K_0\left\langle\sigma_{(1,0,\ldots,0)}\right\rangle_{\mu_{\mathbb H_d,r}}\right]\d r,
\end{equation}
For a face with normal $\xi$, the oriented half-space pressure $\tau_{\beta,\nu}^{\mathbb H_d}(\xi)$ is defined by the same formula after the rotation specified above; the displayed quantity is the value at $\xi=e_1$. The surface-rate estimate is
\begin{equation}
\left|s_{L,\nu}^{(d)}(\beta)-\tau_{\beta,\nu}^{\mathbb H_d}\right|
\leq c_{d-1}(m_\beta)\,\E[|K_0|A_\beta]\,\exp(-m_\beta L).
\label{eq:d-surface-rate}
\end{equation}
\end{proposition}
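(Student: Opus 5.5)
The proof will follow the two-dimensional proof of \cref{thm:exp-localization} step by step. Only three things change: the geometry of the top face, the lateral error terms, and the response formula. For the Gaussian case the latter was the interpolation identity; in general it becomes \cref{eq:general-response}.

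\emph{Uniqueness, measurability, covariance.} I would exhaust the slab $\{1,\ldots,L\}\times\mathbb Z^{d-1}$ and the half-space $\mathbb H_d$ by finite cylinders $\{1,\ldots,L'\}\times\{-R,\ldots,R\}^{d-1}$, keeping the bottom field fixed. For a fixed site $x$, the boundary sum in \cref{eq:EBM} over the cylinder boundary is at most a polynomial in $R$ (of degree $d-2$, times $L'$) times $e^{-m_\beta R}$, so it tends to zero almost surely. Hence any two subsequential limits have the same one-site marginals. Exposing spins one at a time and absorbing them into the admissible boundary law gives equality of all cylinder marginals. Existence comes from compactness. Joint measurability in $(r,\omega)$ follows because the finite-volume expectations are measurable and converge pointwise. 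Tangential covariance follows by translating the exhaustion.

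\emph{Response-compatibility.} For the finite slab, differentiate in $r$:
\[
\partial_r\E\Delta F_{L,\mathbf n,r}^{\nu}/|\mathbf n|=-|\mathbf n|^{-1}\sum_z\E[K_z\langle\sigma_{(1,z)}\rangle_{L,\mathbf n,r}].
\]
Replacing each finite-volume magnetization by the unique slab-state magnetization introduces an error of at most $|K_z|A_\beta C_{L,d}e^{-m_\beta h(z)}$, where $h(z)$ is the distance of $z$ to the lateral faces. After expectation this is bounded by $M_{\beta,K}$ times a sum which, divided by $|\mathbf n|$, is $O(1/\min_i n_i)$. This is exactly where the hypothesis $M_{\beta,K}<\infty$ is needed, since $K_z$ and $A_\beta$ may be correlated. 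The quantity $p_{L,\nu}^{(d)}(r)$ is a limit of convex functions of $r$ (each $\log Z$ is convex), so derivatives converge at every differentiability point. This gives the response identity almost everywhere, and \cref{eq:general-response} follows by absolute continuity and $p(0)=0$.

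\emph{Localization bound.} Fix $C_R=\{1,\ldots,L\}\times\{-R,\ldots,R\}^{d-1}$ and disintegrate both the slab state and the half-space state over the spins outside $C_R$. This produces two admissible boundary laws which agree on the bottom field. They differ only on the top set $\{L+1\}\times\{-R,\ldots,R\}^{d-1}$ and on the lateral faces. The mixture clause of \cref{eq:EBM} bounds the difference of the magnetizations at $x_0=(1,0,\ldots,0)$ by $A_\beta$ times the sum of $e^{-m_\beta\dist(x_0,y)}$ over these sets. The lateral contribution is $O(LR^{d-2}e^{-m_\beta R})\to0$. For the top face, $\dist(x_0,(L+1,z))=L+\|z\|_1$, so the sum over $z\in\mathbb Z^{d-1}$ factorizes into $d-1$ geometric series, each equal to $c(m)$ as in \cref{eq:top-sum}. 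This gives $c_{d-1}(m_\beta)e^{-m_\beta L}$. Letting $R\to\infty$ and taking expectations yields \cref{eq:d-rho}, uniformly in $r$.

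\emph{Surface rate.} Subtract the two response formulas:
\[
|s-\tau|\le\int_0^1\E\bigl[|K_0|\,|\langle\sigma\rangle_{\mu_{L,r}}-\langle\sigma\rangle_{\mu_{\mathbb H_d,r}}|\bigr]\d r.
\]
Here I would not pass through the $\E$-level bound \cref{eq:d-rho}, because $K_0$ and the magnetization difference are not independent. Instead I would use the pointwise bound $A_\beta c_{d-1}(m_\beta)e^{-m_\beta L}$ from the previous step, which holds before expectation. This gives \cref{eq:d-surface-rate} with constant $\E[|K_0|A_\beta]\le M_{\beta,K}$.

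The main obstacle I expect is the response-compatibility step. It requires the interchange of the derivative and the van Hove limit, which needs convexity together with an honest accounting of the $|K_z|A_\beta$ lateral errors. It also requires admissibility of the disintegrated boundary laws, which should be checked carefully rather than asserted.
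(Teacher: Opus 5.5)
Your proposal is correct and follows the paper's proof essentially step for step. The shared steps are: exhaustion by tangential cylinders for uniqueness and joint measurability; the $M_{\beta,K}$-weighted lateral-cut estimate together with convergence of derivatives of the limiting pressures for response compatibility; the factorized top-face sum $c_{d-1}(m_\beta)e^{-m_\beta L}$; and multiplying the pathwise bound by $|K_0|$ before taking expectations. One cosmetic slip: $p^{(d)}_{L,\nu}$ is a limit of concave (not convex) functions of $r$, because $\Delta F$ carries the factor $-\beta^{-1}\log Z$, but this does not affect the derivative-convergence step.
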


\begin{proof}
The exhaustion argument in the proof of \cref{thm:exp-localization} applies with tangential cylinders $[-R,R]^{d-1}$: the lateral-boundary sum is a polynomial of degree $d-2$ in $R$ times $e^{-m_\beta R}$ and tends to zero. It proves uniqueness and joint measurability. Differentiating the finite-volume pressure gives the finite-volume version of the response formula. The same mixing estimate bounds the tangential average of the $K$-weighted lateral-cut error by $M_{\beta,K}$ times a boundary-to-volume ratio; it tends to zero. Convergence of derivatives of concave pressure functions at differentiability points therefore proves response compatibility and hence \cref{eq:general-response}.

The top boundary of a width-$L$ slab is $T_L=\{(L+1,z):z\in\mathbb Z^{d-1}\}$. For $x_0=(1,0,\ldots,0)$,
\begin{equation}
\sum_{z\in\mathbb Z^{d-1}}
\exp\left(-m\,\dist(x_0,(L+1,z))\right)
\begin{aligned}
&=\exp(-mL)\prod_{j=1}^{d-1}\left(1+2\sum_{q\geq1}\exp(-mq)\right)\\
&=c_{d-1}(m)\exp(-mL).
\end{aligned}
\end{equation}
Disintegrate the half-space and slab states over a common finite tangential cylinder, exactly as in the proof of \cref{thm:exp-localization}. The mixture version of boundary mixing applies to the induced laws; the lateral term vanishes as the cutoff tends to infinity. The displayed top-boundary sum then proves \cref{eq:d-rho}. Multiplying the pathwise top-boundary estimate by $|K_0|$, taking expectations, and using \cref{eq:general-response} for the slab and half-space states gives \cref{eq:d-surface-rate}.
\end{proof}

\subsection{A verifiable low-temperature criterion}

The preceding boundary-mixing assumption is not needed for the following sparse model: disagreement percolation on its open-edge graph gives a direct, checkable localization estimate without referring to an infinite-volume Gibbs-state property.

\begin{lemma}[Single-edge influence]\label{lem:single-edge-influence}
For an Ising specification at inverse temperature $\beta$, changing one neighboring spin across an edge with coupling $J$ changes the conditional law at the other endpoint by at most $\tanh(\beta|J|)$.
\end{lemma}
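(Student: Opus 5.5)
The plan is to compute the conditional law at the endpoint exactly and then bound the change with an elementary hyperbolic estimate. Fix the edge $\{x,y\}$ with coupling $J$ and condition on all spins other than $\sigma_x$.

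\emph{Step 1: reduce to an effective field.} The Ising specification gives the conditional law of $\sigma_x$ as
\[
\Pp(\sigma_x=s\mid\text{rest})\propto e^{\beta s(h+J\sigma_y)},
\]
where $h$ collects the couplings to the other neighbours of $x$ together with any boundary field at $x$. Any law on $\{-1,+1\}$ is determined by its mean, and the total variation distance between two such laws is half the difference of their means. So it suffices to compare the two means $\tanh(\beta(h+J))$ and $\tanh(\beta(h-J))$.

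\emph{Step 2: bound the change.} Flipping $\sigma_y$ changes the effective field from $h+J$ to $h-J$. The quantity to bound is therefore
\[
\tfrac12\bigl|\tanh(\beta h+\beta J)-\tanh(\beta h-\beta J)\bigr|.
\]
I would use the addition formula
\[
\tanh(a+b)-\tanh(a-b)=\frac{2\tanh b\,(1-\tanh^2 a)}{1-\tanh^2 a\tanh^2 b}.
\]
Set $t=\tanh^2 a\in[0,1)$ and $u=\tanh^2 b<1$. Then $(1-t)/(1-tu)\le 1$, so the difference is at most $2\tanh(\beta|J|)$ uniformly in $h$. Halving gives a total-variation change of at most $\tanh(\beta|J|)$, which is the claim.

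\emph{Step 3: extend to free boundaries and mixtures.} If the edge belongs to the free boundary law, the comparison is between $h$ and $h\pm J$. The same addition formula, with $b$ replaced by $b/2$ and the field shifted by $J/2$, handles this case. Mixtures over other spins are covered because the bound above is uniform in $h$.

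The only delicate point is Step 2, specifically making the bound uniform in the unbounded field $h$. The identity reduces this to the monotone inequality $(1-t)/(1-tu)\le1$, so I do not expect it to be an obstacle.
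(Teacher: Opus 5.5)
Your proof is correct and follows the paper's route: you write the conditional mean at the endpoint as $\tanh(\beta(h\pm J))$ and bound half the difference uniformly in $h$. Your addition-formula computation makes explicit the bound $\tfrac12\bigl|\tanh(\beta h+\beta J)-\tanh(\beta h-\beta J)\bigr|\le\tanh(\beta|J|)$, which the paper asserts without detail; your Step 3 goes beyond the lemma as stated and is not needed.
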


\begin{proof}
The conditional mean at a site with local field $h$ is $\tanh(\beta h)$. Flipping a neighboring spin changes $h$ from $h+J$ to $h-J$, and the maximal total variation difference of the two Bernoulli laws is bounded by $\tanh(\beta|J|)$.
\end{proof}

\begin{theorem}[Sparse open-bond localization]\label{thm:dp-localization}
Let $d\geq2$ and let the internal couplings be i.i.d. with law
\[
\nu_\theta=(1-\theta)\delta_0+\frac{\theta}{2}(\delta_{J_0}+\delta_{-J_0}),
\qquad J_0>0,
\]
where $0\leq\theta<(2d-1)^{-1}$. Put $q_d=(2d-1)\theta$. If the boundary fields are independent of the internal couplings and have a symmetric law with finite first moment, then the strip and half-space states are unique, jointly measurable, and response-compatible for every finite $\beta$ and every $r\in[0,1]$, and, uniformly in $r$,
\begin{equation}
\sup_{0\leq r\leq1}
\E\left|
\left\langle\sigma_{(1,0,\ldots,0)}\right\rangle_{\mu_{L,r}}
-\left\langle\sigma_{(1,0,\ldots,0)}\right\rangle_{\mu_{\mathbb H_d,r}}
\right|
\leq \frac{4d \theta}{1-q_d}\,q_d^{L-1},
\label{eq:dp-rho}
\end{equation}
\begin{equation}
\left|s_{L,\nu_\theta}^{(d)}(\beta)-\tau_{\beta,\nu_\theta}^{\mathbb H_d}\right|
\leq \mu_{\mathrm b}\frac{4d \theta}{1-q_d}\,q_d^{L-1}.
\label{eq:dp-surface-rate}
\end{equation}
\end{theorem}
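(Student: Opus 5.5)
The plan is to use disagreement percolation on the random graph of open bonds, those with $J_a\neq0$. Each internal bond is open independently with probability $\theta$. If we declare every open bond \emph{active} (conservatively, using $\tanh(\beta J_0)\le1$ from Lemma~\ref{lem:single-edge-influence}), then any two Gibbs measures on a finite region $V$ with the same bottom field $rK$ can be coupled so that their disagreement set is contained in the union of open clusters meeting the set $D$ where the boundary conditions differ. Concretely, I would run the standard van den Berg--Maes sequential coupling: reveal spins one at a time, starting from the neighbours of $D$. A site whose already revealed neighbours agree in both copies gets identical conditional laws, because the bottom field is deterministic given $K$ and is shared. A site can only disagree if it is joined by an open bond to a disagreeing site. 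This holds for every realisation of $K$ and every $\beta$, and therefore also after mixing over admissible boundary laws. Consequently
\[
\bigl|\langle\sigma_x\rangle^{\eta}_{V,r}-\langle\sigma_x\rangle^{\eta'}_{V,r}\bigr|\le 2\,\mathbf 1\{x\leftrightarrow D\text{ by an open path}\}.
\]
The bound is deterministic given $J$ and uniform in $r$, $\beta$ and the boundary laws.

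Next I would prove uniqueness and measurability. Since $q_d<1$, the open clusters are a.s.\ finite: a self-avoiding path of length $\ell$ starting from a fixed first edge has at most $(2d-1)^{\ell-1}$ continuations, so the expected number of open paths of length $\ell$ from $x$ is at most $2d\,\theta\,q_d^{\ell-1}$. Exhausting $\mathcal S_L$ or $\mathbb H_d$ by cylinders, the one-site marginals at $x$ stabilise once the cylinder contains the open cluster of $x$. Exposing sites one at a time, as in the proof of \cref{thm:exp-localization}, extends this to all cylinder marginals, which gives uniqueness. Joint measurability follows because the infinite-volume quantities are pointwise limits of measurable finite-volume ones.

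For response compatibility I would follow the argument of \cref{prop:d-localization}. The lateral-cut error at a site is bounded by $2\,\mathbf 1\{\text{site}\leftrightarrow\text{lateral face}\}$, and this indicator is independent of the boundary fields $K$. The $K$-weighted error therefore has expectation at most $2\mu_{\mathrm b}\,\Pp(\cdot)$, and its tangential average vanishes. Convergence of derivatives of the concave $p_L$ then gives \cref{eq:general-response} for the slab, and the same argument applies to the half-space.

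For the localization bound \cref{eq:dp-rho}, I disintegrate both states over a common tangential cylinder, as in \cref{thm:exp-localization}. Letting the cutoff go to infinity, the difference at $x_0=(1,0,\dots,0)$ is at most $2\,\Pp(x_0\leftrightarrow T_L)$, where $T_L$ is the top layer $x_1=L+1$. Reaching $T_L$ requires an open self-avoiding path of length at least $L$, the last edge being the crossing bond into $T_L$; in the slab that bond is deleted, which is why the exponent is $L-1$. Summing over lengths,
\[
2\sum_{\ell\ge L} 2d\,\theta\,q_d^{\ell-1}=\frac{4d\theta}{1-q_d}\,q_d^{L-1}.
\]
I expect this to be the step needing the most care: checking the exact length count so that the constant $4d\theta/(1-q_d)$ and the exponent $L-1$ come out as stated, and making sure the coupling bound survives the mixture over disorder-dependent boundary laws.

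For \cref{eq:dp-surface-rate} I would subtract the response formulas \cref{eq:general-response} for the slab and the half-space. The integrand difference is bounded by $|K_0|\cdot2\,\mathbf 1\{x_0\leftrightarrow T_L\}$. Since the connectivity event depends only on $J$ and is independent of $K_0$, the expectation factorises into $\mu_{\mathrm b}$ times the probability bound above. Integrating over $r\in[0,1]$ gives the claim.
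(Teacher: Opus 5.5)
Your proposal is correct and follows the paper's proof essentially step for step: the open-bond graph, the discrepancy bound $2\,\mathbf 1\{x\leftrightarrow D\}$, the union bound over at most $2d(2d-1)^{\ell-1}$ self-avoiding paths, a.s.\ finiteness of clusters for uniqueness, the $K$-weighted lateral-cut error for response compatibility, and independence of $K_0$ from the open graph for \eqref{eq:dp-surface-rate}. Two remarks: the paper gets the discrepancy bound directly from the exact factorization of the Hamiltonian over open clusters, which is simpler than your sequential coupling and avoids your per-site claim about conditional laws (as stated, that claim needs the Markov property of the open-bond graph, not just agreement of already revealed neighbours); and the exponent $L-1$ does not come from the deleted slab bond --- the crossing bond into $T_L$ must itself be open in the half-space and contributes its own factor $\theta$, so the $\ell=L$ term is $2d(2d-1)^{L-1}\theta^{L}=2d\theta\,q_d^{L-1}$.
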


\begin{proof}
Call an internal edge open when its coupling is nonzero. Conditional on the open-edge graph, the Hamiltonian factorizes over its connected components. Consequently, for any two boundary laws that agree below the top of the slab, a component containing $x_0$ can have different spin marginals only if it contains an open path from $x_0$ to the top boundary. The same statement holds after coupling arbitrary mixtures of boundary conditions. A path of length $n$ from a fixed site is self-avoiding after erasing loops, and there are at most $2d(2d-1)^{n-1}$ such paths. Since each prescribed edge is open with probability $\theta$, a union bound gives
\begin{equation}
\begin{aligned}
\E\left|\left\langle\sigma_{x_0}\right\rangle_{\mu_{L,r}}
-\left\langle\sigma_{x_0}\right\rangle_{\mu_{\mathbb H_d,r}}\right|
&\leq 2\sum_{n\geq L}2d(2d-1)^{n-1}\theta^n\\
&=\frac{4d\theta}{1-q_d}q_d^{L-1}.
\end{aligned}
\label{eq:dp-path-sum}
\end{equation}
The same path bound for the boundary of a graph-distance ball shows that the probability that a fixed site is connected to distance $R$ tends to zero. Thus every open component is finite almost surely (first for a fixed site and then for all sites by countability), so the finite-component specifications define the unique strip and half-space DLR states. Their local expectations are measurable in $(r,\omega)$. At a tangential site a distance $h$ from a finite-volume lateral boundary, the expected $K$-weighted discrepancy from the infinite-volume marginal is at most a constant times $\mu_{\mathrm b}q_d^h$. Averaging this summable bound over a tangential box, then using convergence of derivatives of the concave pressures, proves response compatibility. The estimate is uniform in the bottom field and hence in $r$. Multiplying by $|K_0|$, using independence of $K_0$ from the open graph, and applying \cref{eq:general-response} proves \cref{eq:dp-surface-rate}.
\end{proof}

\begin{corollary}[Sparse signed couplings]
For the law in \cref{thm:dp-localization},
\begin{equation}
q_d=(2d-1)\theta<1.
\end{equation}
The exponential localization therefore holds uniformly for all finite $\beta$, including arbitrarily low temperatures. If $\theta>0$, its exponential rate $-\log q_d$ has a positive, finite limit as $\beta\to\infty$; if $\theta=0$, the localization error is identically zero and the bound is interpreted with $q_d^0=1$.
\end{corollary}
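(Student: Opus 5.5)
The corollary follows directly from \cref{thm:dp-localization}. My plan has three steps: check the hypothesis, show the bounds do not depend on $\beta$, and read off the rate.

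\textbf{Step 1 (the hypothesis).} The standing assumption $0\leq\theta<(2d-1)^{-1}$ gives $q_d=(2d-1)\theta<1$ immediately. With this, the geometric series in \cref{eq:dp-path-sum} converges, and the bounds \cref{eq:dp-rho} and \cref{eq:dp-surface-rate} are meaningful.

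\textbf{Step 2 (uniformity in $\beta$).} The right-hand sides of \cref{eq:dp-rho} and \cref{eq:dp-surface-rate} involve only $d$, $\theta$, $L$ and $\mu_{\mathrm b}$. They contain no $\beta$. To see that nothing $\beta$-dependent is hidden, I will retrace the proof of \cref{thm:dp-localization}. The disagreement event there is purely geometric: an open path from $x_0$ to the top boundary. Its probability is controlled by the union bound over self-avoiding paths, with each edge open with probability $\theta$. The single-edge influence $\tanh(\beta|J|)$ of \cref{lem:single-edge-influence} is never used in an essential way; one can simply bound it by $1$. For the same reason, uniqueness of the DLR states and response compatibility hold for every finite $\beta$. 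Both rest on the almost-sure finiteness of open clusters, and the finite-component specifications exist at every $\beta$. Hence the localization holds uniformly for all finite $\beta$, however large.

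\textbf{Step 3 (the rate).} The bound has the form $C\,q_d^{L-1}$ with $C=4d\theta/(1-q_d)$. So when $\theta>0$ the exponential rate is $-\log q_d\in(0,\infty)$. This value does not depend on $\beta$, so its limit as $\beta\to\infty$ trivially exists and equals $-\log q_d$, which is positive and finite.

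When $\theta=0$, every internal coupling vanishes almost surely. The site $x_0$ is then isolated, and its marginal is determined by the bottom field alone. The strip and half-space magnetizations therefore coincide, and the error is identically zero, consistent with the prefactor $4d\theta/(1-q_d)=0$. The convention $q_d^0=1$ just keeps the expression $q_d^{L-1}$ well defined when $L=1$.

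The only step that needs real care is Step 2: confirming that the path-counting bound is genuinely $\beta$-free. Everything else is a direct reading of \cref{thm:dp-localization}.
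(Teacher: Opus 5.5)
Your proposal is correct and takes essentially the paper's route: the paper gives no separate proof and treats the corollary as a direct reading of \cref{thm:dp-localization}, in which $q_d<1$ is the standing hypothesis, the bounds \cref{eq:dp-rho}, \cref{eq:dp-surface-rate} and the rate $-\log q_d$ are $\beta$-free because the argument uses only connectivity of the open-edge graph (as the closing remark states), and $\theta=0$ leaves $x_0$ isolated with $q_d^0=1$ as a convention. Your check in Step 2 that \cref{lem:single-edge-influence} plays no role in the theorem's proof is exactly the point that justifies uniformity in $\beta$.
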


\begin{remark}
The sparse criterion is genuinely checkable from the one-edge law: it is the subcriticality inequality $(2d-1)\theta<1$ for the open-bond graph. This proof uses only connectivity of nonzero couplings and therefore applies at every finite temperature, while making no localization claim for the ordinary full Gaussian EA model.
\end{remark}

\section{Zero temperature and the status of the open problem}

The almost-additivity proof does not use differentiability of the free energy and therefore survives at $\beta=\infty$. This gives a genuine zero-temperature tangential surface pressure for each fixed thickness. The interpolation formula, by contrast, is a finite-temperature identity and should not be passed to $\beta=\infty$ without additional uniform estimates. In particular, a zero-temperature ground-state limit cannot be inferred merely by taking $\beta\to\infty$ in \cref{eq:interpolation}.

Theorem \ref{thm:d-pressure} shows that the fixed-thickness construction is not tied to Gaussian couplings or to two dimensions. The disagreement-percolation theorem gives a separate low-temperature localization result for sparse signed laws, including a regime in which the rate remains positive as $\beta\to\infty$. The ordinary full Gaussian law falls outside this verifiable criterion at low temperature, so its normal-direction problem remains distinct.

The fixed-thickness theorem also clarifies the relation between two common observables. Let $D_N$ be a periodic-to-antiperiodic seam-flip free energy in a square. Symmetry of the seam disorder forces $\E D_N=0$ at each $N$, whereas the free-to-fixed correction has the nonzero mean described by \cref{eq:interpolation}. A variance bound for $D_N$ is therefore not a variance bound for $\Delta F_N$. This distinction is essential in discussions of stiffness exponents \cite{FisherHuse1988,McMillan1984,BrayMoore1987,HartmannYoung2001,Amoruso2006}.

For the standard low-temperature Gaussian EA model on ordinary squares, the remaining problem can now be stated sharply:
\begin{quote}
Prove the exponential boundary-mixing condition, or a weaker sufficient condition for boundary localization, for the relevant half-space Gibbs states, and determine whether the resulting half-space pressure is independent of the exhaustion.
\end{quote}
The finite-volume interpolation identity, Gaussian concentration, and compactness yield tightness and subsequential information, but they do not supply the missing localization estimate. Theorem \ref{thm:exp-localization} proves that exponential boundary mixing would give the explicit rate \cref{eq:rho-explicit}; it does not establish that hypothesis for the ordinary low-temperature Gaussian EA model. This is why the present result should be read as a reduction and a rigorous strip theorem, not as a solution of the full cubic problem.

\section{Discussion}

The main theorem is intentionally formulated for a single face. Deriving a full fixed-boundary correction from face contributions requires a separate telescoping argument and quantitative control of interactions between faces. The strip construction avoids assuming such a decomposition and makes the tangential ergodic mechanism explicit.

There are several natural next steps. One is to verify \cref{eq:EBM}, or to replace it by a block or annealed mixing condition weaker than Dobrushin uniqueness, perhaps by a metastate construction or a quantitative estimate on the covariance between a boundary spin and degrees of freedom at depth $L$. Another is to establish a central limit theorem for the centered strip correction at fixed thickness. A third is to determine whether the half-space response is selection-independent in the two-dimensional Gaussian model. Each of these questions is genuinely about boundary states; none follows from finite-volume differentiation alone.

The assumptions on the enforced boundary layer are compatible with the standard fixed-exterior formulation. Fixing the exterior spins turns every crossing bond into a random one-site boundary field. The strip theorem applies to one face of that layer; a full cube requires additional face-additivity estimates that are deliberately not asserted here. The variables in the present paper are coupled across all sizes on one probability space, so the almost-sure claim in \cref{eq:stripas} has a precise meaning.

\Addresses

\begin{thebibliography}{99}
\footnotesize
\setlength{\itemsep}{0.08em}

\bibitem{EdwardsAnderson1975}
S.~F. Edwards and P.~W. Anderson, \emph{Theory of spin glasses},
J. Phys. F: Met. Phys. \textbf{5} (1975), 965--974.
\url{https://doi.org/10.1088/0305-4608/5/5/017}.

\bibitem{SherringtonKirkpatrick1975}
D.~Sherrington and S.~Kirkpatrick, \emph{Solvable model of a spin-glass},
Phys. Rev. Lett. \textbf{35} (1975), 1792--1796.
\url{https://doi.org/10.1103/PhysRevLett.35.1792}.

\bibitem{Parisi1979}
G.~Parisi, \emph{Infinite number of order parameters for spin-glasses},
Phys. Rev. Lett. \textbf{43} (1979), 1754--1756.
\url{https://doi.org/10.1103/PhysRevLett.43.1754}.

\bibitem{McMillan1984}
W.~L. McMillan, \emph{Domain-wall renormalization-group study of the three-dimensional random Ising model},
Phys. Rev. B \textbf{30} (1984), 476--479.
\url{https://doi.org/10.1103/PhysRevB.30.476}.

\bibitem{Imbrie1984}
J.~Z. Imbrie, \emph{Lower critical dimension of the random-field Ising model},
Phys. Rev. Lett. \textbf{53} (1984), 1747--1750.
\url{https://doi.org/10.1103/PhysRevLett.53.1747}.

\bibitem{Chayes1986}
J.~T. Chayes, L.~Chayes, D.~S. Fisher, and T.~Spencer,
\emph{Correlation length bounds for disordered Ising ferromagnets},
Phys. Rev. Lett. \textbf{57} (1986), 2999--3002.
\url{https://doi.org/10.1103/PhysRevLett.57.2999}.

\bibitem{BrayMoore1987}
A.~J. Bray and M.~A. Moore, \emph{Scaling theory of the spin-glass ground state},
J. Phys. A: Math. Gen. \textbf{20} (1987), L927--L933.
\url{https://doi.org/10.1088/0305-4470/20/15/026}.

\bibitem{BricmontKupiainen1987}
J.~Bricmont and A.~Kupiainen, \emph{Lower critical dimension for the random-field Ising model},
Phys. Rev. Lett. \textbf{59} (1987), 1829--1832.
\url{https://doi.org/10.1103/PhysRevLett.59.1829}.

\bibitem{FisherHuse1988}
D.~S. Fisher and D.~A. Huse, \emph{Equilibrium behavior of the spin-glass ordered phase},
Phys. Rev. B \textbf{38} (1988), 386--411.
\url{https://doi.org/10.1103/PhysRevB.38.386}.

\bibitem{AizenmanWehr1990}
M.~Aizenman and J.~Wehr, \emph{Rounding effects of quenched randomness on first-order phase transitions},
Commun. Math. Phys. \textbf{130} (1990), 489--528.
\url{https://doi.org/10.1007/BF02096933}.

\bibitem{HartmannYoung2001}
A.~K. Hartmann and A.~P. Young, \emph{Lower critical dimension of Ising spin glasses},
Phys. Rev. B \textbf{64} (2001), 180404.
\url{https://doi.org/10.1103/PhysRevB.64.180404}.

\bibitem{GuerraToninelli2002}
F.~Guerra and F.~L. Toninelli, \emph{The thermodynamic limit in mean field spin glass models},
Commun. Math. Phys. \textbf{230} (2002), 71--79.
\url{https://doi.org/10.1007/s00220-002-0699-y}.

\bibitem{NewmanStein2003}
C.~M. Newman and D.~L. Stein, \emph{Ordering and broken symmetry in short-ranged spin glasses},
J. Phys.: Condens. Matter \textbf{15} (2003), R1319--R1364.
\url{https://doi.org/10.1088/0953-8984/15/32/202}.

\bibitem{Boettcher2005}
S.~Boettcher, \emph{Stiffness of the Edwards-Anderson model in all dimensions},
Phys. Rev. B \textbf{72} (2005), 180405.
\url{https://doi.org/10.1103/PhysRevB.72.180405}.

\bibitem{Amoruso2006}
C.~Amoruso, A.~K. Hartmann, and A.~P. Young,
\emph{Stiffness exponent of two-dimensional Ising spin glasses},
Phys. Rev. Lett. \textbf{97} (2006), 087202.
\url{https://doi.org/10.1103/PhysRevLett.97.087202}.

\bibitem{ArguinNewmanSteinWehr2014}
L.-P. Arguin, C.~M. Newman, D.~L. Stein, and J.~Wehr,
\emph{Fluctuation bounds for interface free energies in spin glasses},
J. Stat. Phys. \textbf{156} (2014), 221--238.
\url{https://doi.org/10.1007/s10955-014-1009-8}.

\bibitem{Talagrand1996}
M.~Talagrand, \emph{A new look at independence in product spaces},
Invent. Math. \textbf{126} (1996), 505--530.
\url{https://doi.org/10.1007/s002220050107}.

\bibitem{Ledoux2001}
M.~Ledoux, \emph{The Concentration of Measure Phenomenon},
Mathematical Surveys and Monographs, vol.~89, Amer. Math. Soc., 2001.
\url{https://doi.org/10.1090/surv/089}.

\bibitem{BoucheronLugosiMassart2013}
S.~Boucheron, G.~Lugosi, and P.~Massart,
\emph{Concentration Inequalities: A Nonasymptotic Theory of Independence},
Oxford University Press, 2013.
\url{https://doi.org/10.1093/acprof:oso/9780199535255.001.0001}.

\bibitem{Bolthausen1982}
E.~Bolthausen, \emph{On the central limit theorem for stationary mixing random fields},
Ann. Probab. \textbf{10} (1982), 1047--1050.
\url{https://doi.org/10.1214/aop/1176993766}.

\bibitem{Georgii2011}
H.-O. Georgii, \emph{Gibbs Measures and Phase Transitions}, 2nd ed.,
de Gruyter, Berlin, 2011.
\url{https://doi.org/10.1515/9783110875550}.

\bibitem{Ruelle1969}
D.~Ruelle, \emph{Statistical Mechanics: Rigorous Results},
W.~A. Benjamin, 1969.
\url{https://doi.org/10.1007/978-3-642-08729-7}.

\bibitem{NewmanStein1997}
C.~M. Newman and D.~L. Stein, \emph{Metastate approach to thermodynamic chaos},
Phys. Rev. E \textbf{55} (1997), 5194--5211.
\url{https://doi.org/10.1103/PhysRevE.55.5194}.

\bibitem{AizenmanLebowitzRuelle1987}
M.~Aizenman, J.~L. Lebowitz, and D.~Ruelle,
\emph{Some rigorous results on the Sherrington-Kirkpatrick spin glass model},
Commun. Math. Phys. \textbf{112} (1987), 3--20.
\url{https://doi.org/10.1007/BF01218484}.

\bibitem{WangZhuChueng2026}
H.~Wang, K.~Zhu, and M.~Chueng,
\emph{Boundary Free Energies in Disordered Ising Models},
arXiv:2607.18326 (2026).
\url{https://arxiv.org/abs/2607.18326}.

\end{thebibliography}
\end{document}